\documentclass[a4paper,UKenglish]{lipics-v2018}


\usepackage{mdframed} 
\usepackage{hyperref}
\hypersetup{colorlinks=true,linkcolor=blue,urlcolor=cyan,citecolor=green}
\usepackage{algorithmic}
\usepackage{amsmath, bm}
\usepackage{multirow}
\usepackage[ruled,vlined]{algorithm2e}
\newtheorem{thm}{Theorem}
\newtheorem{lem}[thm]{Lemma}
\newtheorem{prp}[thm]{Proposition}

\usepackage{float}
\usepackage{listings}

\newcommand{\OO}{\mathcal{O}\xspace}

\newcommand{\sO}{{\mathcal{O}^*}\xspace}

\newcommand{\volalg}{\textsc{volume}\xspace}
\newcommand{\annealing}{\textsc{annealing\_schedule}\xspace}
\newcommand{\billiard}{\textsc{billiard\_walk}\xspace}
\newcommand{\tests}{\textsc{perform\_tests}\xspace}
\newcommand{\estiratio}{\textsc{estimate\_ratio}\xspace}

\newcommand{\zapprox}{\textsc{compute\_z-approx}\xspace}

\DeclareSymbolFont{matha}{OML}{txmi}{m}{it}
\DeclareMathSymbol{\varv}{\mathord}{matha}{118}

\def\c++{{\tt C++}}

\def\vol{\mbox{vol}}
 
\def\cg{{\tt CoolingGaussian}}

\def\testR{{\tt L-test}}

\def\RR{{\mathbb R}}

\ExecuteOptions{notab}
\bibliographystyle{plain}
\ExecuteOptions{notab}
\newcommand{\todo}[1]{}
\renewcommand{\todo}[1]{{\color{red} TODO: {#1}}}

\title{Practical volume estimation by a new annealing schedule for cooling convex bodies}
\titlerunning{Practical volume estimation}

\author{Apostolos Chalkis}
{Department of Informatics \& Telecommunications \\ National \& Kapodistrian University of Athens, Greece \\ GeomScale org}{achalkis@di.uoa.gr}{}{}
{}
\author{Ioannis Z.~Emiris}
{Department of Informatics \& Telecommunications\\
National \& Kapodistrian University of Athens, Greece}{emiris@di.uoa.gr}{}
{}{}
\author{Vissarion Fisikopoulos}
{Department of Informatics \& Telecommunications \\ National \& Kapodistrian University of Athens, Greece \\ GeomScale org}
{vfisikop@di.uoa.gr}
{}
{}{}

\authorrunning{A.~Chalkis et al.} 

\Copyright{Apostolos Chalkis, Ioannis Z.~Emiris and Vissarion Fisikopoulos}

\subjclass{Design and analysis of algorithms: \\Computational geometry, Random walks and Markov chains}

\keywords{Polytope volume, zonotope, sampling, simulated annealing, V-polytopes}

\acknowledgements{}

\usepackage{algorithmic}


\nolinenumbers
\hideLIPIcs
\begin{document}

\mdfsetup{linewidth=0.7pt}
\maketitle 

\begin{abstract}
We tackle the problem of efficiently approximating the volume of convex polytopes, when these are given in three different representations: H-polytopes, which have been studied extensively, V-polytopes, and zonotopes (Z-polytopes).
We design a novel practical Multiphase Monte Carlo algorithm that leverages random walks based on billiard trajectories, as well as a new empirical convergence tests and a simulated annealing schedule of adaptive convex bodies.
After tuning several parameters of our proposed method, we present a detailed experimental evaluation of our tuned algorithm using a rich  dataset containing Birkhoff polytopes and polytopes from structural biology.    
Our open-source implementation tackles problems that have been intractable so far, offering the first software to scale up in thousands of dimensions for H-polytopes and in the hundreds for V- and Z-polytopes on moderate hardware.
Last, we illustrate our software in evaluating Z-polytope approximations.
\end{abstract}

\section{Introduction}\label{sec:intro}

Volume computation is a fundamental problem with many applications~\cite{Cales18, Wong03,Iyengar88,Schellenberger09,venzke2019}. 
It is computationally hard: $\#$P-hard for explicit polytopes~\cite{Dyer88,volZono}, and APX-hard~\cite{Elekes1986} for convex bodies in the oracle model. 
Therefore, a great effort has been devoted to randomized approximation algorithms, based on Monte Carlo methods and Markov chains for sampling. The first celebrated result given in~\cite{DyerFrKa91} with complexity $\sO(d^{23})$, where $\sO(\cdot)$ suppresses
polylog factors and dependence on error parameters, and $d$ is the dimension.
Improved algorithms reduced the exponent to $5$~\cite{LovSim}.
The latter led to the first practical implementation~\cite{VolEsti} for high dimensional polytopes.
Further results~\cite{CousinsV14,LovVem} reduced the exponent to $3$, followed by another practical implementation~\cite{Cousins15}.
There are two limitations in current algorithm implementations. First, they only support polytopes given by a set of inequalities (H-polytopes) and, second, they scale typically only up to a few hundreds of dimensions.
Our aim is to provide the first practical method and its implementation that scales up to thousands dimensions for H-polytopes and to a few hundreds for other polytope representations.

Interestingly, Monte Carlo methods based on Markov chains have gain a lot of interest in the literature for several  important problems. For example, in maximum likelihood estimation~\cite{geyer_markov_1991}, machine learning \cite{brock2018large}, finance \cite{Cales18}, optimal control \cite{he2017numerical, huynh2012incremental}, Bayesian inference \cite{gamerman2006markov}, optimization~\cite{Martino16, schumer_adaptive_1968, Kalai06, Shcherbakov10, Bertsimas04}, and sampling~\cite{devroye_random_1984, martino_independent_2018, Lovasz06, neal2011mcmc, Narayanan20}.

Considering volume approximation, in the most general setting, a convex body is given by a (membership or separation) oracle. 
Thus, the complexity of a randomized volume algorithm is given by an upper bound on the number of oracle calls. 
Generally, a randomized algorithm uses a Multiphase Monte Carlo (MMC) technique that reduces volume approximation of a convex body $P$ to computing a product of ratios of integrals. 
The MMC we employ defines a sequence of convex bodies $P_k\subseteq\dots\subseteq P_0=P$. 
Then, the volume of $P$ is given by the telescoping product of volume ratios:
\begin{equation}\label{eq:telescopic_product}
\vol(P) = \vol(P_k) \frac{\vol(P_{k-1})}{\vol(P_k)} \cdots \frac{\vol(P_0)}{\vol(P_1)}.
\end{equation}
Clearly, the number of volume ratios is equal to the number of bodies in MMC.

Geometric random walks\footnote{i.e., Markov Chain Monte Carlo (MCMC) algorithms to sample from multivariate distributions constrained in (non-)convex bodies.} come into the picture to estimate each ratio by sampling from an appropriate distribution. 
A random walk in polytope $P$ starts at some interior point and, at each step, moves to a ``neighbouring" point, that we choose according to some distribution depending only on the current point. 
The complexity of a random walk is determined by two factors:
its mixing time, which equals the number of steps (or \emph{walk length}) required to make the distance between the current and the target distribution close to null, and the complexity of the basic operations that we perform at each step of the walk, called per step cost.

A crucial observation is that, if the ratios are bounded by some constant, 
then acceptance-rejection sampling would suffice to accurately estimate them, i.e.\ from Chebyshev inequality we obtain that for some \emph{volume ratio} $i<k$,
about $k\cdot(\vol(P_{i-1})/\vol(P_i))$ uniformly distributed points in $P_i$ suffice to estimate that volume ratio.
All in all, the total number of operations performed by an MMC volume algorithm is 
\begin{equation}\label{eq:total_complexity}
k \times\#\text{points per volume ratio}\, \times\#\text{operations per point}.
\end{equation}

The relations $rB_d\subseteq P\subseteq RB_d$, for $r<R$ and $B_d$ the $d$-dimensional unit ball, imply that $k=\lceil d\lg (R/r)\rceil$. 
We call $R/r$ the {\em sandwiching ratio} of $P$. 
For general convex bodies, the sandwiching ratio is minimized when the body is {\em well-rounded}, i.e.\ $B_d\subseteq P\subseteq \OO(\sqrt{d})B_d$. 
In~\cite{LovSim} they put $P$ in isotropic position, which implies well-roundness, and use scaled copies of the $d$-dimensional unit ball $B_d$ to define the MMC, i.e.\ $P_i=(2^{(k-i)/d}B_d)\cap P$, for $i=0,\dots ,k=\Theta(d\lg d)$.
These techniques combined with the so called speedy walk --a special case of the ball wall-- delivers an $\sO(d^5)$ algorithm.

In~\cite{LovVem} they fix a sequence of functions $f_0,\dots, f_k$ and $\vol(P)$ is given by a telescopic product of ratios of integrals generalizing Equation~(\ref{eq:telescopic_product}). 
They use a sequence of exponential functions; after putting $P$ in well-rounded position in $\sO(d^4)$ oracle calls, the number of integral ratios in MMC $k=\sO(\sqrt{d})$ and the total number of oracle calls is $\sO(d^4)$ by using Hit-and-Run (HR). 
In~\cite{CousinsV14} they employ a sequence of spherical Gaussian distributions and an improved mixing time for the ball walk. 
They assume that the input convex body is well-rounded and the total number of oracle calls is $\sO(d^3)$. 
In~\cite{Jia20} they provide an improved rounding algorithm which puts a convex body in well-rounded position in $O(d^{3.5})$ oracle calls (that becomes $O(d^3)$ after recent progress in KLS conjecture~\cite{chen20}). Thus, the latter complexity also determines the bound for volume approximation of general convex bodies. 
%
For H-polytopes, with $m$ facets, in~\cite{Lee18} they extend the simulated annealing of~\cite{LovVem} to the manifold setting and together with the Riemannian Hamiltonian Monte Carlo sampling, they obtain a $\OO^*(m^2d^{\omega-1/3})$ bound, where $\omega$ stands for the matrix multiplication exponent.
In~\cite{Mangoubi19}, they show a $\OO^*(md^{4.5}+md^{4})$ bound via a sub-linear ball walk and Gaussian cooling.

However, it is impractical\footnote{A recent exception for sampling is the implementation of~\cite{kook22}.} to use an implementation of the above volume approximation algorithms due to extremely large constants in the complexity and pessimistic upper bounds on the mixing time of the random walks they employ.
Considering practical volume algorithms, the objective is to perform efficient computations to obtain the volume of a high dimensional polytope. 
Typically, to achieve this goal, a practical algorithm on one hand, is based on a theoretical algorithmic scheme and on the other hand, provides practical adjustments, heuristics and efficient tuning. 
Moreover, practical algorithms instead of theoretical guarantees, provide strong empirical evidences for the run-time and the accuracy of the method through extended experimental results. 
For example, the method in~\cite{VolEsti} is based on the ball sequence in~\cite{DyerFrKa91} and performs $\sim md^4$ operations to achieve a relative error at most $0.1$ in practice. 
The method in~\cite{Cousins15} is based on the Gaussian cooling in~\cite{CousinsV14}, and performs $\sim md^3$ operations to achieve the same relative error.

Considering sampling in practice, the dominant paradigm for random walks is CDHR~\cite{kaufman98}. 
It is a version of HR that uses directions parallel to the axes. 
HR mixes after $\sO(d^3)$~\cite{Lovasz06} steps for isotropic log-concave distributions.
Very recently, the mixing time of CDHR has been bounded by $\sO(d^9(R/r)^2)$~\cite{Laddha20, Narayanan20}.
However, experiments~\cite{VolEsti, CousinsChnr} indicate that both CDHR and HR have similar convergence rates in practice. 
Thus, the faster step of CDHR compared to HR---$O(m)$ vs. $O(md)$---is the main reason why CDHR overshadowed, until recently, all other random walks in practical computations on H-polytopes.


A certainly interesting and overlooked problem is to compute in practice the volume of a polytope given by other representations, namely, V-polytopes (given by the set of extreme points or vertices) and Z-polytopes or zonotopes, given as a Minkowski sum of segments.
For those representations all known practical algorithms~\cite{VolEsti, Ge2015, Cousins15} fail to compute the volume for dimensions beyond, say, $d> 15$. 
There are two main reasons for that, besides the costly oracles (which are equivalent to a linear program). 
First, it is not possible to compute efficiently the largest inscribed ball needed as an initialization by all algorithms.
In particular, for a Z-polytope $P$ and a ball $B$, checking whether $B\subseteq P$ is in co-NP
while for $P$ being a V-polytope, given $p\in P$, the computation of the largest inscribed ball centered at $p$ is NP-hard~\cite{Vpolyinsc}. 
Second, the number of facets --which is requested by the implementation of~\cite{Cousins15}-- is typically exponential in $d$ for both Z- and V-polytopes. 

A powerful approach to obtain well roundness is to put $P$ in \emph{near isotropic position}~\cite{LovVem, CousinsV14}. 
In~\cite{adamczak10}, they prove that $\OO^*(d)$ uniformly distributed points in $P$ suffice to achieve $2$-isotropic position after some proper linear transformations. 
However, this could be quite computationally expensive in practice due to extensive uniform sampling from skinny polytopes.
In~\cite{CousinsChnr}, they provide an alternative method which achieves efficiency comparable in practice to that in~\cite{Cousins15}. 
They compute the maximum volume ellipsoid in $P$, they map it to the unit ball and apply to $P$ the same transformation. 
They experimentally show that a few iterations suffice to put $P$ in John's position~\cite{John48}. 
However, when $P$ is a V-polytope there is no known algorithm to compute the largest inscribed ellipsoid of $P$.
For a Z-polytope, the John ellipsoid $E$ is tighter, as for any centrally symmetric convex body compared to non-symmetric convex bodies. However, there is no algorithm for the computation of the John ellipsoid of a Z-polytope. 
The algorithm in~\cite{Cern12} computes an ellipsoid $E: \ E\subseteq P\subseteq dE$ for a given Z-polytope; the complexity is polynomial but the degree is not determined. 
Thus, it is unclear whether it may lead to an efficient rounding and whether it can be practical for Z-polytopes. 

Interestingly, for well-rounded convex polytopes, using any sequence of bodies to fix the telescopic product leads to $k = \Theta^*(d)$ phases in the telescopic product~\cite{LovVem}, while using a sequence of any log-concave functions leads to $k = \Theta^*(\sqrt{d})$ phases~\cite{CousinsV14}. 
Therefore, the fact that our practical algorithm is faster than that in~\cite{Cousins15} apparently contradicts theory~\cite{LovSim, CousinsV14}. The main reason behind this "contradiction" is the efficiency of the improved version of Billiard Walk we use. 
In particular, extended experiments have shown that ---for rounded convex polytopes--- Billiard Walk mixes after $O^*(1)$ steps in practice~\cite{chalkis2020geometric}, while its per step cost is $O(md)$. On the other hand, both HR and CDHR mix after $O^*(d^2)$ steps in practice~\cite{Cousins15, CousinsChnr}.

\subsection{Our contributions}\label{subsec:contributions}

We propose the first practical volume algorithm that scales to thousands of dimensions for H-polytopes outperforming all the existing practical algorithms. Additionally, it efficiently handles two more representations of polytopes, namely V- and Z-polytopes, up to a few hundred dimensions for the first time. 
We provide an extended experimental analysis that provides empirical evidences on the accuracy as well as the performance of our practical algorithm.
Moreover, our experiments show that our proposed practical algorithm outperforms the state-of-the-art, namely~\cite{Cousins15} (e.g.\ for unit cubes, 
20 times faster in $d=50$ and 95 times faster in $d=500$; for $d=1000$ our implementation is expected to be $\sim 130$ times faster; see section~\ref{subsec:h_poly_experiments} and Figure~\ref{fig:cb_cg_runtime_comparison} for more details).

The algorithm (Algorithm~\ref{alg:volume}) is based on the MMC framework and combines the sequence of convex bodies 
with a new simulated annealing schedule, while leveraging uniform sampling and the superior practical performance of Billiard walk~\cite{POLYAK20146123}.
In brief, the algorithm fixes a convex body $C\in\RR^d$ and a sequence $C_1\supseteq\cdots \supseteq C_k$ that intersects $P$, where $C_i = q_iC$, $q_i\in\RR_+$, 
using a cooling or annealing schedule (Algorithm~\ref{alg:annealing}).
The algorithm uses a telescopic product (Equation~(\ref{eq:teleprod2})) of ratios of volumes, each  bounded by a constant with high probability by the annealing schedule. 
With high probability, $k=\OO(\lg(\vol(P)/\vol(P_k)))$, where $P$ is the input polytope and $P_k = C_k\cap P$ the body with minimum volume in MMC (Section~\ref{subsubsec:number_of_phases}), which is not surprising, as---if $C$ is a ball---this results to $\sO(d)$ bodies in MMC with $\sO(d)$ points required per volume ratio.
However, our scheme offers crucial advantages in practice compared to the state-of-the-art by (a) optimizing the hidden constants in the complexity (Propositions~\ref{nballs1}, and~\ref{nballs2}), (b) taking advantage of fast practical uniform samplers, and (c) exploiting good choices of $C$ in the Z-polytope case.

To fix the sequence of $C_i$ we introduce a new simulated annealing schedule. The schedule minimizes the number of phases $k$ given a body $C$ with high probability ---assuming perfect uniform sampling. 
To achieve this we define a new statistical test that restricts each ratio in the telescopic product in Equation~(\ref{eq:telescopic_product}) in the interval $[r, r+\delta]$, where $r$ and $\delta$ are predefined constants.

For sampling, we leverage the Billiard walk for the first time in volume computation (Section~\ref{sec:sampling}). 
We show that, with the right selection of parameters, this walk outperforms CDHR even if the walk length is $1$. 
Using this walk length, our algorithm only needs $\sO(d)$ points per volume ratio (Section~\ref{sec:implementation} and Figures~\ref{fig:zono_compl}, and~\ref{fig:vpoly_compl}), to ensure the desired accuracy (Figure~\ref{fig:hnr_errors}). 
We provide efficient implementations for the step of \billiard\ for all polytope representations.
For Z- and V-polytopes, the steps are reduced to solving linear programs while for H-polytopes, with $m$ facets, we exploit the fast implementation of \billiard\ in~\cite{chalkis2020geometric} to extend it to the case of the intersection with a ball. Our efficient implementation requires $O(dm)$ operations per \billiard\ point.


\begin{figure}[t!]\centering
\includegraphics[width=0.45\textwidth]{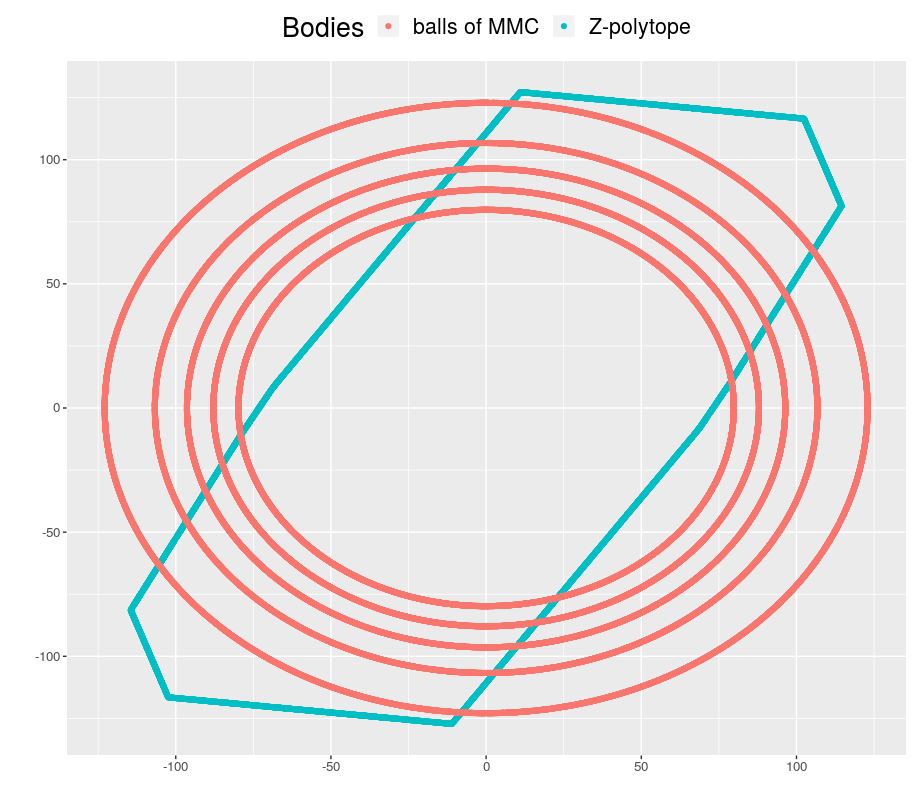}
\hspace{1cm}
\includegraphics[width=0.45\textwidth]{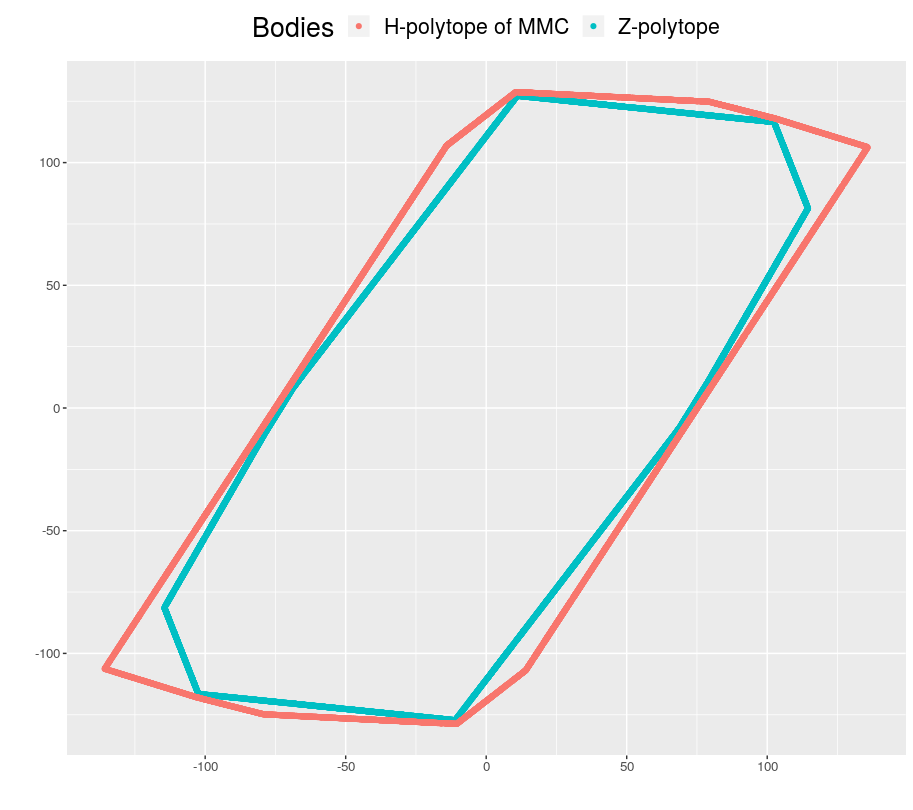}
\caption{An example of how our algorithm uses different bodies to fix the telescopic product (MMC) in Equation~(\ref{eq:telescopic_product}) and choses the most efficient option. When the input polytope is a Z-polytope, left, we use balls in MMC. Right, we use a centrally symmetric H-polytope that can be exploited to skip rounding and to reduce the number of bodies in MMC and total runtime. In both plots $\frac{\vol(C'\cap P)}{\vol(C')} \in [0.9,0.95]$, where $P$ is the Z-polytope, left $C'$ is the smallest ball, and right, $C'$ is the red centrally symmetric H-polytope.\label{fig:zono_ann}}
\end{figure}

For rounding, we propose methods for all types of polytopes (Section~\ref{sec:rounding}) that are faster than the ones currently available~\cite{VolEsti, Cousins15, CousinsChnr} as reported by our experiments (Section~\ref{sec:implementation}).
For H-polytopes, we simply enhance the method from~\cite{Cousins15} with the billiard sampler. 
For V-polytopes, our method brings $P$ to approximate John's position by computing the smallest enclosing ellipsoid using $P$'s vertices and the algorithm in~\cite{Todd07}.
For Z-polytopes, we construct an H-polytope that approximates $P$ well, and set it as $C$ in MMC.
We experimentally show that this choice, with no rounding, is more efficient than estimating the volume after applying various rounding methods.

We offer an optimized implementation of the proposed algorithm that scales in thousands of dimensions for H-polytopes and hundreds of dimensions for V- and Z-polytopes on moderate hardware (Section~\ref{sec:implementation}) for the first time.  
Our software performs computations which were intractable until now (Tables~\ref{fig:zono}, and~\ref{fig:vpoly}), outperforms current state-of-the-art implementations~\cite{VolEsti, Cousins15} in all representations, and is illustrated in evaluating Z-polytope approximations in the context of mechanical engineering~\cite{Kopetzki17}.
Our implementation\footnote{\url{https://github.com/GeomScale/volume_approximation/tree/v1.1.3}} builds upon and enhances {\tt volesti}, a \c++\ open source library for high-dimensional sampling and volume computation with an {\tt R} interface~\cite{Rvolesti}.

\paragraph{Notation} We denote the dimension that the polytope $P$ lies with $d$. For H-polytopes $m$ denotes the number of facets, For V- and Z-polytopes we denote the number of vertices and the number of generators respectively with $n$. Considering algorithms, with \volalg\ we refer to our practical algorithm and its implementation, and with \cg\ we refer to the implementation of the practical algorithm in~\cite{Cousins15}.

\section{Uniform sampling and polytope oracles}\label{sec:sampling}

In this section we introduce the proposed uniform sampler that the volume algorithm will use and discuss polytope operations.

We start with some notation. $P$ is a full-dimensional convex polytope lying in $d$-dimensional space and $\partial P$ is its boundary.
An H-polytope (in H-representation) is
\begin{equation}\label{eq:hpoly}
P=\{x\ |\ Ax\leq b,\ A\in\RR^{m\times d},\ b\in\RR^{m} \}
\end{equation} 
with $m$ facets in $d$ dimensions.
A V-polytope is the convex hull of a pointset in $\RR^d$; equivalently it can be seen as a linear projection, given by a matrix $V\in\RR^{d\times n}$, of the canonical simplex, i.e.\ 
\[\{x\in\RR^n\ |\ \sum_{i=1}^nx_i=1,\ x_i\geq 0 \},\] 
while the columns of $V$ corresponds to the $n$ vertices of $P$.
A zonotope (Z-polytope) is the Minkowski sum of $n$ $d$-dimensional segments or equivalently given by matrix $G\in\RR^{d\times n}$ and seen as a linear map of hypercube $[-1,1]^n$ to $\RR^d$: we call it a Z-representation. The {\em order} of a Z-polytope is the ratio $n/d$. We denote by $[n]$ the set $\{1,2,\dots,n\}$ for some natural number $n$. 

Random sampling is fundamental for efficient volume approximation. Our algorithm employs Billiard walk (Algorithm~\ref{alg:billiard}, \billiard) to sample approximately uniform points from a polytope $P$. \billiard\ starts from a given point $p_0 \in P$,
selects uniformly at random a
direction, say $v_0$, and it moves along the direction of $v_0$ for length $L$;
it reflects on the boundary if necessary. This results a new point $p_1$ inside
$P$. We repeat the procedure from $x_1$. Asymptotically it converges to the
uniform distribution over $P$. The length is $L=-\tau \ln \eta$, where $\eta$ is
a uniform number in $(0,1)$, that is $\eta\sim\mathcal{U}(0,1)$, and $\tau$ is a
predefined constant. It is useful to set a bound, say $\rho$, on the number of
reflections to avoid computationally hard cases where the trajectory may stuck
in corners. In \cite{POLYAK20146123} they set
$\tau \approx \mathop{diam}(P)$ and $\rho =10d$.

\begin{algorithm}[t]
  \caption{\billiard$(P, p, \rho, \tau, S)$}
	\label{alg:billiard}
	\SetKwInOut{Input}{Input}
	\SetKwInOut{Output}{Output}
    \SetKwInOut{Require}{Require}
    \SetKwRepeat{Do}{do}{while}

    \Input{polytope $P$; point $p$; upper bound on the number of reflections $\rho$; length of trajectory parameter $\tau$; walk length $S$.}

    \Require { point $p\in P$}

	\Output{A point in $P$}

    \BlankLine
    \For {$j=1,\dots ,S$} {
       $L \leftarrow -\tau\ln\eta$, $\eta\sim \mathcal{U}(0,1)$ \tcp{length of the trajectory}

       $i\leftarrow 0$ \tcp{current number of reflections}

       $p_0\leftarrow p$ \tcp{initial point of the step}
       
       pick a uniform vector $\varv_0$ from the boundary of the unit ball


    	\Do{$i\leq \rho$} {
    		 $\ell \leftarrow \{p_i + t\varv_i, 0\leq t\leq L\}$ \tcp{segment}
    		\lIf{$\partial P\cap\ell=\varnothing$} {
    			$p_{i+1} \leftarrow p_i+L\varv_i$ ; \textbf{break}
    		}
    		$p_{i+1} \leftarrow \partial P\cap\ell$  \tcp{point update}
    		the inner vector, $s$, of the tangent plane at $p$, s.t.\ $||s|| = 1$\;
    		$L \leftarrow L - |P\cap\ell|$\;
    		$\varv_{i+1} \leftarrow \varv_i - 2(\varv_i^Ts) s$ \tcp{direction update}
    		$i \leftarrow i+1$\;
    	}
    	\leIf{$i=\rho$} {
    		$p \leftarrow p_0$
    	}
    	{
    		$p \leftarrow p_i$
    	}
    }
    \KwRet $p$\;
\end{algorithm}

\billiard\ requires an access to a boundary oracle, that is, to compute the intersection of a ray $\ell:=\{p+t\varv,\ t\in\RR_+ \}$ with $\partial P$, where $p\in P$. Additionally, the walk needs the tangent plane on the intersection point to compute the reflection of the billiard trajectory.

When the input is an H-polytope, see Equation (\ref{eq:hpoly}), the inner vector of the boundary point's facet is given by a row of $A$.
To compute the boundary point $\partial P\cap\ell$ for the first reflection, we solve $a_j^T(p_0 + t\varv_0) = b_j,\ j\in[m]$, for $t$, and keep the smallest positive root, where $a_j$ is the $j$-th row of $A$.
This takes $dm$ arithmetic operations in total.
A straightforward approach for \billiard\ would consider that each reflection costs $\OO(md)$ and thus each step costs $\OO(\rho md)$.
However, we exploit the fast version of \billiard\ in~\cite{chalkis2020geometric}. It improves both {\it point} and {\it direction updates} (see related comments in Algorithm~\ref{alg:billiard}) by storing computations from the previous iteration combined with preprocessing.
The preprocessing computes the inner products of all possible pairs of normal vectors of the facets, that takes $m^2 d$ operations. Then, the first reflection of each \billiard\ step costs $\OO(md)$ operations and the rest ones cost $\OO(m)$ operations each. Thus, the amortized per-step complexity to sample from a H-polytope with \billiard\ becomes $\OO((\rho + d)m)$.

\begin{lem}~\cite{chalkis2020geometric} \label{lem:BW-step-cost}
  The amortized per-step complexity of the \billiard\ (Algorithm~\ref{alg:billiard}) for a polytope $P\in\RR^d$ with $m$ facets, given in H-representation, is $\OO((\rho + d)m)$ operations after preprocessing that takes $\OO(m^2d)$  operations, where $\rho$ is the maximum number of reflections per step.
\end{lem}
In the context of the volume algorithm (section~\ref{sec:method}) when the body used in MMC is a ball we have to sample from the intersection of a H-polytope with a ball using \billiard. 
Assuming that the center of the ball is the origin with radius $R$, to compute the intersection of the ray $\ell:=\{p+t\varv,\ t\in\RR_+ \}$ with the boundary of the ball we compute the positive root of the equation,
\begin{equation}
t^2 + 2(p^T\varv)t + \|p\|_2^2 - R^2 = 0 ,
\end{equation}
which takes $\OO(d)$ operations. 
The normal to the tangent plane of a ball at point $p$ is $p/ \|p\|_2$. Thus, to compute the reflection of the ray when it hits the boundary of the ball takes $\OO(d)$ operations. Since $m>d$ the amortized per-step complexity of \billiard\ is $\OO((\rho + d)m)$.

For both Z- and V-polytopes, given a ray $\ell$, computing $\partial P\cap\ell$ reduces to a linear program (LP). 
One may consider the pre-image of $\ell$, hence computation of $\partial P\cap\ell$ becomes an optimization problem in the region defined by the intersection of a linear subspace with the hypercube or the simplex respectively. Let $g_i$ be the generators of a Z-polytope, $i\in[n]$. The intersection of $\partial P\cap \ell$ is given by the following LP:
\begin{equation}\label{eq:z_poly_boc}
\begin{split}
\max\ t,\hspace{0.3cm}s.t.\ \hspace{0.2cm}  & p+t\varv = \sum_{i=1}^{n} x_ig_i,\\
    & -1\leq x_i\leq 1,\ i\in[n] .
    \end{split}
\end{equation}
If $v_i$ are the vertices of a V-polytope, $i\in[n]$, then the intersection of $\partial P\cap \ell$ is given as follows:
\begin{equation}\label{eq:v_poly_boc}
\begin{split}
\max\ t,\hspace{0.3cm}s.t.\ \hspace{0.2cm}  & p+t\varv = \sum_{i=1}^{n} x_iv_i ,\\
    & x_i\geq 0,\ \sum_{i=1}^nx_i=1,\ i\in[n] .
    \end{split}
\end{equation}

To compute the reflection of the ray we exploit the solutions of the above LPs. For V-polytopes we keep the vertices that correspond to the positive $x_i$, which generate the facet that $\ell$ hits. Then, we compute the normal to the hyperplane they define. Similarly, for Z-polytopes we keep the generators that correspond to $x_i\neq \{-1,1\}$. Computing the normal in both cases requires to solve a $d\times d$ linear system.

Volume estimation also requires a membership oracle which decides whether point $p$ lies in $P$ (see estimation of $r_{k+1}$ in Equation~(\ref{eq:teleprod2})). For an H-polytope one has to check at most $m$ inequalities, thus it costs $\OO(dm)$ operations. For both Z- and V-polytopes, membership oracles are feasibility problems reducing to LP. For a V-polytope, $p\in P$ if and only if the following region is feasible: 
\[\sum_{i=1}^n x_i v_i = p,\ x_i\geq 0,\ \sum_{i=1}^n x_i =1.\]
For a Z-polytope, $v_i$ corresponds to the generators of $P$ and the rest of the constraints become $-1\leq x_i\leq 1$.

\section{Preprocessing: to round or not to round}\label{sec:rounding}

We introduce two different preprocessing methods. First, a rounding method that transforms a polytope of any representation to a so-called near isotropic position. Second, we propose a preprocessing method for Z-polytopes without rounding.  

We employ the rounding in~\cite{Cousins15} to introduce a faster method that brings $P$ to near-isotropic position, using \billiard\ for uniform sampling instead of Hit-an-Run. The method samples uniformly from $P$, then computes a linear transformation that puts the sample to isotropic position and applies to $P$ the same transformation. To map the pointset to isotropic position, we compute the SVD decomposition of the matrix that contains the generated points in rows~\cite{Shiri20}. We iterate this procedure until the ratio of the maximum over the minimum singular value falls below a threshold, e.g.\ $4$.

Next, we propose a different preprocessing with no rounding that can be more efficient than rounding for some specific polytopes such as Z-polytopes (Section~\ref{sec:implementation}).
The key idea is the choice of $C$ at MMC so that it is both a good approximation of $P$ and we can efficiently sample from it. The first property affects the number of bodies in MMC while the second the runtime per volume ratio in the telescopic product in Equation~(\ref{eq:teleprod2}). 

When $P$ is a Z-polytope with generator matrix $G\in\RR^{d\times n}$, we compute a centrally symmetric H-polytope $C\subset P$ (Algorithm~\ref{alg:z_approx}). Since $P$ is the projection of the hypercube $[-1,1]^n$ onto $\RR^d$ by $G$, our method projects a linear subspace of $[-1,1]^n$ onto $\RR^d$ such that (a) it is easy to compute the H-representation of the underlying polytope $C$ and (b) we minimize the information loss for $P$ implied by the restriction of $[-1,1]^n$ to that subspace. We exploit the basic idea of Principal Component Analysis (PCA)~\cite{Jolliffe16} and obtain $C$ by projecting with $G$ the linear subspace spanned by the kernel of $G^TG$ restricted in $[-1,1]^n$. The matrix $G^TG$ has $n-d$ zero eigenvalues. We compute an orthonormal basis of its kernel matrix $E\in\RR^{n\times (n-d)}$, whose SVD is given below, and use the orthogonal complement $W_{\perp}$ defined as follows:
\[
E = USV^T = \left[
\begin{array}{c} W \\
W_{\perp} \\
\end{array} \right]^T \left[
\begin{array}{cc}
S_1 & 0 \\ 0 & 0 \\
\end{array} \right] V^T.\] 

\begin{algorithm}[t] \caption{\zapprox$(G)$}
	\label{alg:z_approx}
	\SetKwInOut{Input}{Input}
	\SetKwInOut{Output}{Output}
    \SetKwInOut{Require}{Require}
    \SetKwRepeat{Do}{do}{while}
	
    \Input{The generator matrix $G\in\RR^{d\times n}$ of Z-polytope $P$.}
	\Output{An H-polytope $C\supset P$.}
    
    \BlankLine
    Compute the eigenvectors of $G^TG$\;
    Let the eigenvectors of zero eigenvalues form $E \in\RR^{n\times (n-d)}$\;
    Compute an orthonormal basis for $E$, and the orthogonal complement $W_{\perp}$\;
    Let the  H-representation of $[-1,1]^n$ be
$\left\{ y \in \RR^n \,\Big |\, \begin{pmatrix} I_n \\ -I_n\end{pmatrix} y
  \leq \mathbf{1}_{2n\times 1} \right\}$\;
    $M \leftarrow \begin{pmatrix} I_n \\ -I_n\end{pmatrix}W^T_{\perp}(GW^T_{\perp})^{-1}\in\RR^{2n\times d}$\;
    $C \leftarrow \{ x\ |\ Mx\leq \mathbf{1}_{2n\times 1} \}$\;
    \KwRet $C$\;
\end{algorithm}

\noindent
In short, we use the linear map induced by $W_{\perp}$ to compute an H-representation of $C$. 
By construction, $C\subset P$ since the domain of $C$ is a subset of the domain of $P$ (subspace of the hypercube) under the same mapping $G$. 

For an illustration of an example of $C$ and $P$ see Figure~\ref{fig:zono_ann} (right).

\section{Volume algorithm} \label{sec:method} 

In this section, we detail the volume algorithm (Algorithm~\ref{alg:volume}) based on uniform sampling and the preprocessing of Sections~\ref{sec:sampling}, and~\ref{sec:rounding}.
The algorithm uses the following telescopic product:
\begin{equation}\label{eq:teleprod2}
\vol(P) = \frac{\frac{\vol(P_k)} {\vol(C_k)}} {\frac{\vol(P_1)}{\vol(P_0)}\frac{\vol(P_2)}{\vol(P_1)}\cdots \frac{\vol(P_{k})}{\vol(P_{k-1})}} \, \vol(C_k),
\end{equation}
where $P_0=P,\ P_i=C_i\cap P,\, i=1,\dots , k$.
The annealing schedule bounds each ratio $r_i$ by a constant, with high probability.
The algorithm uses a ratio estimation (Algorithm~\ref{alg:ratio_estimation}) to  estimate 
\[
r_i = \frac{\vol(P_i)}{\vol(P_{i-1})},\; 1\leq i\leq k,
\]
using \billiard\ (Algorithm~\ref{alg:billiard}) to sample uniformly from $P_{i}$, and acceptance/rejection in $P_{i+1}$. 
The algorithm also estimates the ratio
\[
r_{k+1} = \frac{\vol(P_k)} {\vol(C_k)} 
\]
using uniform sampling from $C_k$ and acceptance/rejection in $P_{k}$. 
Lastly, \estiratio\ uses a sliding window of the last $l$ ratios and an empirical criterion to declare convergence in a statistical sense. 

Algorithm \volalg\ is parameterized by: the error of approximation $\epsilon$, cooling parameters $r,\delta >0$, $0< r+\delta < 1$, significance level (s.l.) $\alpha >0$ of the statistical tests, the degrees of freedom $\nu$ for the t-student used in t-tests, and $N$ that controls the number of points $\nu N$ generated per $P_i$ all used in the annealing schedule.

\begin{algorithm}[t]
  \caption{\volalg$(P, \epsilon, r, \delta, \alpha, \nu, N, l)$}
	\label{alg:volume}
	\SetKwInOut{Input}{Input}
	\SetKwInOut{Output}{Output}
    \SetKwInOut{Require}{Require}
	
    \Input{A polytope $P$; requested error $\epsilon$; annealing parameters $r, \delta, \nu, N$; length of sliding window $l$.}

    
	\Output{An estimation of $\vol(P)$}
    
    \BlankLine
    	Let $q_{\min}C\subseteq P \subseteq q_{\max}C$\; 
	$\{ P_0,\dots ,P_k,C_k \} =$ \annealing$(P,C,r ,\delta ,\alpha ,\nu, N,q_{\min},q_{\max} )$\;
	Set $\epsilon_i,\ i=1,\dots ,k+1$ s.t.\ ${\sum_{i=0}^{k+1}\epsilon_i^2}=\epsilon^2$\;
	\For{$i=1,\dots k+1$} {
		\leIf {$i\leq k$} {
			$r_i=$ \estiratio($P_i,P_{i+1},\epsilon_i, k, l$)\;
		}
		{
			$r_{k+1}=$ \estiratio($C_k,P_k, \epsilon_{k+1} , k, l$)
		}
	}
	\KwRet $\vol(C_k)/r_1/\cdots /r_k\cdot r_{k+1}$\;

\end{algorithm}

\subsection{Annealing schedule for convex bodies} \label{subsec:annealing}

Given $P$ and $q_{\min}C \subseteq P \subseteq q_{\max}C$, \annealing\ generates the sequence of convex bodies $C_1\supseteq\cdots\supseteq C_k$ defining $P_i=C_i\cap P$ and $P_0=P$. The computation of $q_{\min},q_{\max}$ is not trivial but in Section~\ref{sec:implementation} we give practical choices depending on $C$ which are very efficient in practice. The main goal is to restrict each ratio $r_i$ to the interval $[r, r+\delta]$ with high probability.

We introduce some notions from statistics needed to define two tests and refer to \cite{Cramer46} for details.
Given $\nu$ observations from a r.v.\ $X\sim\mathcal{N}(\mu ,\sigma^2 )$ with unknown variance $\sigma^2$, the (one tailed) t-test checks the null hypothesis that the population mean exceeds a specified value $\mu_0$ using the statistic $t= \frac{\bar{x}-\mu_0}{s/\sqrt{\nu}}\sim t_{\nu -1}$, where $\bar{x}$ is the sample mean, $s$ the sample st.d.\ and $t_{\nu -1}$ is the t-student distribution with $\nu-1$ degrees of freedom. Given a significance level $\alpha>0$ we test the null hypothesis for the mean value of the population, $H_0:\mu\leq \mu_0$ against $H_1:\mu > \mu_0$. We reject $H_0$ if,
\begin{align*}
t\geq t_{\nu -1,\alpha}	\iff \bar{x}\geq \mu_0+t_{\nu-1,\alpha}{s}/{\sqrt{\nu}},
\end{align*}
which implies $\Pr [\text{reject }H_0\ |\ H_0\ \text{true} ] =\alpha$.
Otherwise we fail to reject $H_0$.
In the sequel, we define two statistical tests for a volume ratio, which can be reduced to t-tests:
\begin{align}
\text{U-test}(P_1,P_2)\quad & H_0: \vol(P_2)/\vol(P_1)\geq r+\delta \\ 
\text{L-test}(P_1,P_2)\quad & H_0: \vol(P_2)/\vol(P_1)\leq r
\end{align}

The U-test and L-test are successful if and only if null hypothesis $H_0$ is rejected, namely $r_i$ is upper bounded by $r+\delta$ or lower bounded by $r$, with high probability, respectively. 

If we sample $N$ uniform points from a body $P_{i-1}$ then r.v.\ $X$ that counts points in $P_i$, follows $X\sim b(N,r_i)$, the binomial distribution, and $Y=X/N\sim \mathcal{N}(r_i,r_i(1-r_i)/N)$ follows the Normal distribution. The Algorithm~\ref{alg:perform_tests} provides all the steps to perform both statistical tests.

\paragraph{Remark}\label{Rrule}
The normal approximation of r.v.\ $Y=X/N$ suffices when $N$ is large enough and we adopt the well known rule of thumb to use it only if $Nr_i(1-r_i)>10$. 

Then each sample proportion that counts successes in $P_i$ over $N$ is an unbiased estimator for $\mathbb{E}[Y]$, which is $r_i$. So if we sample $\nu N$ points from $P_{i-1}$ and split the sample into $\nu$ sublists of length $N$, the corresponding $\nu$ ratios are experimental values that follow $\mathcal{N}(r_i,r_i(1-r_i)/N)$ and can be used to check both null hypotheses in U-test and L-test. Let $\hat{\mu}$ be the sample mean of the ratios. If 
\begin{equation}
r+\delta -t_{\nu-1,\alpha}\frac{s}{\sqrt{\nu}}\geq\hat{\mu}\geq r +t_{\nu-1 ,\alpha }\frac{s}{\sqrt{\nu}}
\end{equation}
then both U-test and L-test are successful and $r_i$ is restricted to $[r,r+\delta]$ with high probability. 

\begin{algorithm}[t]
  \caption{\tests$(P_1, P_2, r, \delta, \alpha, \nu, N)$}
	\label{alg:perform_tests}
	\SetKwInOut{Input}{Input}
	\SetKwInOut{Output}{Output}
    \SetKwInOut{Require}{Require}
	
    \Input{bodies $P_1, P_2$; parameters $r, \delta, \alpha, \nu, N$.}

    
	\Output{results of L-test and U-test}
    
    \BlankLine
	Sample $\nu N$ uniform points from $P_1$\;
	Partition $\nu N$ points to lists $S_1 ,\dots ,S_{\nu}$, each of length $N$ \;
	Compute ratios $\hat{r}_i = |\{ q\in P_2 :\ q\in S_i\} | / \, N$, $i=1,\dots ,\nu$\;
	Compute the mean, $\hat{\mu}$, and st.d., $s$, of the $\nu$ ratios\;
	\leIf {$\hat{\mu}\geq r +t_{\nu-1 ,\alpha }\frac{s}{\sqrt{\nu}}$} { 	
		{\tt L-holds} $\leftarrow$ {\tt true}
	}
	{
		{\tt L-holds} $\leftarrow$ {\tt false}
	}
	\leIf {$\hat{\mu}\leq r +\delta - t_{\nu-1 ,\alpha }\frac{s}{\sqrt{\nu}}$} {
		{\tt U-holds} $\leftarrow$ {\tt true}
	}
	{
		{\tt U-holds} $\leftarrow$ {\tt false}
	}
	\KwRet ({\tt L-holds}, {\tt U-holds})\;

\end{algorithm}

Let us now describe \annealing\ in more detail: 
Given $q_{\min}C \subseteq P \subseteq q_{\max}C$, \annealing\ computes the sequence $C_1\supseteq \dots\supseteq C_k$. Each body $C_i$ is a scalar multiple of a given body $C$.  When $C$ is the unit ball, the body used in each step is determined by a radius.
The initialization step of \annealing\ computes the body with minimum volume, denoted by $C'$ which is set as $C_k$ at termination of \annealing. To be precise, the last body in the sequence shall be $C_k$, s.t.\ $r_{k+1} = \vol(P_k)/\vol(C_k) \in [r,r+\delta ]$ with high probability. 
In particular, at initialization, \annealing\ binary searches in the interval $[q_{\min},q_{\max}]$ to compute a scalar $q'$ s.t.\ $C'=q'C$ and both U-test($C',C'\cap P$) and L-test($C',C'\cap P$) are successful. 

To compute $C_i = q_iC$, \annealing\ computes the scalar $q_i$ using binary search in a proper interval, such that both U-test($P_{i-1},q_iC\cap P$) and L-test($P_{i-1},q_iC\cap P$) are successful, where $P_0=P,\ P_i = C_i\cap P$ and $i=1,\dots ,k$. As $C'=q'C$ is the body of smallest volume in the sequence of MMC the interval to binary search for $q_i$ has to be $[q',q_{i-1}],\ i=1,\dots, k$, where $q_0 = q_{\max}$. At termination the \annealing\ sets $q_k = q'$. Notice that the binary search in the interval $[q',q_{i-1}]$ implies that $\vol(C'\cap P) < \vol(P_i) < \vol(P_{i-1})$ and if both L-test and U-test are successful then $r_i\in [r,r+\delta]$ with high probability. 
Also recall that in each step of binary search which computes $q_i$, \annealing\ samples $\nu N$ points from $q_{i-1}C\cap P$ to check both  U-test and L-test.

\annealing\ employs $C'$ to decide stopping at the $i$-th step after computing $C_i$; if the criterion fails, the algorithm computes $C_{i+1}$.
In particular, it checks whether $\vol(P_i)/ \vol(C'\cap P) \geq r$ with high probability, using L-test.  
Formally,
\begin{center}
\textit{Stop in step $i$ if\, L-test($P_i,C'\cap P$) holds. Then, set $k=i+1$, $q_k=q'$ and $C_k = q_kC,\ P_k=C_k\cap P$}.\\
\end{center}
To perform L-test, $\nu N$ uniformly distributed points are sampled from $P_i$.

\subsection{Termination}

This section proves that \annealing\ terminates with constant probability.

\begin{thm}\label{thm:Terminate}
Let $J$ be the minimum number of steps by \annealing, corresponding to no errors occurring in the t-tests.
Let also \annealing\ actually perform $M\geq J$ steps and $\beta_{\max}$, $\beta_{\min}$ be the maximum and minimum among all $\beta$'s in the $M$ pairs of t-tests in the U-test and L-test, respectively.
Then, \annealing\ terminates with constant probability, namely:
\begin{align*}
\Pr[\text{\annealing\ terminates}]  
 \geq\ & 1 - 2\frac{\alpha(1-\beta_{\min})+\beta_{\max}}{1 - \alpha(1-\beta_{\min})+\beta_{\max}} - \\
& \frac{2\beta_{\max} -\beta_{\min}^2}{1 - 2\beta_{\max}-\beta_{\min}^2} .
\end{align*} 
\end{thm}

\begin{algorithm}[t]
  \caption{\annealing$(P, C, r, \delta, \alpha, \nu, N, q_{\min}, q_{\max})$}
	\label{alg:annealing}
	\SetKwInOut{Input}{Input}
	\SetKwInOut{Output}{Output}
    \SetKwInOut{Require}{Require}
    \SetKwRepeat{Do}{do}{while}
	
    \Input{bodies $P, C$; parameters $r, \delta, \alpha, \nu, N, q_{\min}, q_{\max}$.}

    
	\Output{A sequence of bodies in Multiphase Monte Carlo}
    
    \BlankLine
    $q_1\leftarrow q_{\min}$, $q_2\leftarrow q_{\max}$, $q\leftarrow (q_1+q_2)/2$\;
	\Do{{\tt true}} {
		//binary search for the initialization\;
		Binary search in $[q_{\min}, q_{\max}]$ for a scaling factor $q$ such that both U-test($qC$, $qC\cap P$) and L-test($qC$, $qC\cap P$) holds\;
		
	    $C'\leftarrow qC$\;
	}
	$P_0 \leftarrow P$, $i \leftarrow 0$, $q_{\min} \leftarrow q$\;
	\Do{{\tt true}} {
		//computation of the sequence\;
		\If{\testR($P_i$,$C'\cap P$) holds} {
			$k \leftarrow i+1$, $P_k \leftarrow C'\cap P$ and \textbf{break}\;
		}
		Binary search in $[q_{\min},q_{\max}]$ for a scaling factor $q$ such that both U-test($P_i$, $qC\cap P$) L-test($qC$, $qC\cap P$) holds\;
		$P_{i+1} \leftarrow qC\cap P$, $i \leftarrow i+1$\;
		$q_{\max} \leftarrow q$\;
	}
	\KwRet $\{P_0,\dots ,P_k,C'\}$\;
\end{algorithm}
\begin{proof}
We demonstrate halting of \annealing.
In the t-tests, errors of different types may occur, thus, binary search may enter intervals that do not contain ratios in $[r,r+\delta]$, which implies that there is a (smaller) probability that \annealing\ fails to terminate.
We bound the probability that \annealing\ enters inappropriate intervals in a certain step --and thus we bound the probability of failing to terminate-- by a constant. 
Let $\beta$ capture the power of a t-test: $pow = \Pr[\text{reject}\ H_0\ |\ H_0\ \text{false}] = 1-\beta$.

Let $M\geq J$. If \annealing\ fails to terminate after $M$ pairs of U-test and L-test, then some type I or type II error occurred in the t-tests. An error of type I occurs when the null Hypothesis is true and the test rejects it, while type II occurs when the null Hypothesis is false and the test fails to reject it.
The respective probabilities are $\Pr[\text{reject }H_0\ |\ H_0 \text{ true}]=\alpha$ and $\Pr[\text{fail to reject }H_0\ |\ H_0 \text{ false}]=\beta$, which is a value of the quantile function of t-student. For the latter probability we write $\beta_L,\ \beta_R$ for U-test and L-test, respectively. If, for a pair of tests, both null hypotheses are false then an error occurs with probability
\begin{align*}
p_1 &= \Pr[\text{error occurs}\ |\ \text{both }H_0\text{ false}] =\beta_L(1-\beta_R) + \beta_R(1-\beta_L) + \beta_L\beta_R \\ &= \beta_L + \beta_R - \beta_L\beta_R \leq 2\beta_{\max} - \beta^2_{\min}
\end{align*}
Similarly,
\begin{align*}
p_2 &= \Pr[\text{error occurs}\ |\ \text{U-test\ }H_0 \text{ false } \text{and L-test\ }H_0\text{ true}]\\ &= \alpha(1-\beta_L) + \alpha\beta_L + \beta_L(1-\alpha) \\ & = \alpha + \beta_L - \alpha\beta_L \leq \alpha + \beta_{\max} - \alpha\beta_{\min}
\end{align*}
\begin{align*}
p_3 &= \Pr[\text{error occurs}\ |\ \text{L-test\ }H_0 \text{ false } \text{and U-test\ }H_0\text{ true}]\\ &=  \alpha(1-\beta_R) + \alpha\beta_R + \beta_R(1-\alpha) \\ & = \alpha + \beta_R - \alpha\beta_R \leq \alpha + \beta_{\max} - \alpha\beta_{\min}
\end{align*}
Then, 
\begin{align*}
\Pr &[\text{\annealing\ fails to terminate}]
 \leq \sum_{i=1}^M p_1^i + \sum_{i=1}^M p_2^i + \sum_{i=1}^M p_3^i\\
&\leq \sum_{i=1}^{\infty} p_1^i + \sum_{i=1}^{\infty} p_2^i + \sum_{i=1}^{\infty} p_3^i 
 = \frac{1}{1-p_1} + \frac{1}{1-p_2} + \frac{1}{1-p_3} - 3 \\
&\leq 2\frac{\alpha(1-\beta_{\min})+\beta_{\max}}{1 - \alpha(1-\beta_{\min})+\beta_{\max}} + \frac{2\beta_{\max}-\beta_{\min}^2}{1 - 2\beta_{\max}-\beta_{\min}^2} .
\end{align*} \hfill $\square$
\end{proof}
\subsubsection{Number of bodies in MMC}\label{subsubsec:number_of_phases}

We give probabilistic bounds on the number of bodies in MMC. Let us assume that 
i) we sample perfect uniform points in each step of \annealing\ and ii) that \annealing\ terminates successfully. We offer a probabilistic upper bound on the number of bodies in MMC $k$ and then a probabilistic interval where $k$ lies.

\begin{prp}\label{nballs1}
Given a convex polytope $P\subset\RR^d$ with sandwiching ratio $R/r$, i.e.\ $rB_d\subseteq P \subseteq RB_d$, and cooling parameters $r,\delta$ such that $r+\delta <1/2$ and parameters $\alpha$, $N$, $\nu$, $q_{\min}$, $q_{\max}$ let $k$ be the number of convex bodies in MMC returned by Algorithm \annealing, when $C$ is the unit ball. Then $\Pr[k\leq \lceil d\lg (R/r)\rceil] \geq 2-\frac{1}{1-\gamma} = 1 -\frac{\gamma}{1-\gamma}$, where $\gamma = \alpha (1-\beta_{\min})$ and $\beta_{\min}$ is the minimum among all the values of the quantile function appearing in L-test.
\end{prp}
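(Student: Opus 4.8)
The plan is to split the argument into a deterministic volumetric counting bound and a probabilistic control of the events that can invalidate it; I take the standard case where $C=B_d$, so each $C_i$ is a scaled ball, which matches the ball-based comparison bound $\lceil d\lg(R/r)\rceil$.

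First, the deterministic core. Assuming Algorithm~\ref{schann} terminates successfully and that every regular step truly achieves $r_i=\vol(P_{i+1})/\vol(P_i)\le r+\delta$, I would bound $m$ by comparing total volume contraction against the per-step contraction. The bodies are nested, $P_0=P\supseteq P_1\supseteq\cdots\supseteq P_m$, so $\vol(P_{m-1})=\vol(P_0)\prod_{i=0}^{m-2}r_i\le \vol(P)(r+\delta)^{m-1}$, using only the ratios produced by regular steps. For the matching lower bound, the key observation is that $C_m=C'$ must protrude from $P$: otherwise $r_m=\vol(C_m\cap P)/\vol(C_m)=1\notin[r,r+\delta]$ since $r+\delta<1$. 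Hence its scaling factor exceeds that of the inscribed ball, giving $rB_d\subseteq C_m$ and therefore $P_{m-1}\supseteq P_m=C_m\cap P\supseteq rB_d$, so $\vol(P_{m-1})\ge\vol(rB_d)$. Combining with $\vol(P)\le\vol(RB_d)$ yields $(r+\delta)^{m-1}\ge (r/R)^d$; taking logarithms and using $r+\delta<1/2\Rightarrow\lg(1/(r+\delta))>1$ gives $m-1<d\lg(R/r)$, i.e.\ $m\le\lceil d\lg(R/r)\rceil$.

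Second, the probabilistic part. The conclusion above fails only if some regular step is accepted although its true ratio exceeds $r+\delta$. I would show that the probability of a single such bad acceptance is at most $\gamma=\alpha(1-\beta_{\min})$: at a probe whose true ratio is $>r+\delta$, the step is accepted only if \testL\ rejects its (true) null hypothesis $H_0:\ r_i\ge r+\delta$, a type~I error of probability at most $\alpha$, while simultaneously \testR\ correctly rejects $H_0:\ r_i\le r$, an event of probability $1-\beta_R\le 1-\beta_{\min}$; bounding the joint event by the product gives $\gamma$, which is exactly the quantity appearing in the statement. Third, I aggregate over the (random, a priori unbounded) number of steps exactly as in the proof of Theorem~\ref{Tterminate}: producing a chain of $i$ successive bad acceptances that forces $m$ beyond the target has probability at most $\gamma^i$, so $\Pr[m>\lceil d\lg(R/r)\rceil]\le\sum_{i=1}^{\infty}\gamma^i=\frac{\gamma}{1-\gamma}$, whence $\Pr[m\le\lceil d\lg(R/r)\rceil]\ge 1-\frac{\gamma}{1-\gamma}=2-\frac{1}{1-\gamma}$.

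I expect the main obstacle to be the probabilistic aggregation rather than the volumetric counting. Since the number of phases is itself random, a plain union bound over a fixed number of steps is unavailable, so one must justify the geometric summation; reusing the telescoping device of Theorem~\ref{Tterminate} is the natural route, but making precise that each ``extra'' phase beyond the volumetric bound costs an independent factor of at most $\gamma$ is delicate. A secondary subtlety is the per-step product bound $\alpha(1-\beta_R)$: because \testL\ and \testR\ share the same sample of $\nu N$ points, their outcomes are correlated through the common $\hat\mu$ and $s$, so strictly speaking one obtains $\gamma$ only under an independence (or favorable-correlation) assumption, which I would state explicitly where it is invoked.
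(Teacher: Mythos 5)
Your proposal is correct and takes essentially the same route as the paper's proof: there too the bad event is a type~I error of \testL\ occurring jointly with a success of \testR, its per-step probability is bounded by $\gamma=\alpha(1-\beta_{\min})$, and the a priori unbounded number of steps is handled by the geometric series $\sum_{i\geq 1}\gamma^i=\frac{\gamma}{1-\gamma}$ --- exactly the device of Theorem~\ref{Tterminate} that you invoke. The differences are only in completeness: the paper leaves your deterministic volumetric counting (the protrusion of $C_m$, the anchor $rB_d\subseteq P_m$, and the passage from $(r+\delta)^{m-1}\geq (r/R)^d$ to $m\leq\lceil d\lg(R/r)\rceil$) entirely implicit and multiplies $\alpha(1-\beta_R)$ without noting the shared-sample correlation you flag, while conversely your enumeration of bad events should also include a bad acceptance at the initialization step (a too-small $C'$ with true ratio exceeding $r+\delta$ breaks the anchor), which has the same structure and the same bound $\gamma$, so the final inequality is unaffected.
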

\begin{proof}
If $k > \lceil d\lg (R/r)\rceil$ holds then $k\geq 1$ type I errors of U-test occurred in \annealing\, i.e., $H_0$ holds but the test rejects it, while L-test was successful, i.e., $H_0$ is false and the test rejects it. Type I error occurs with probability $\alpha$ and the probability of the success of L-test is $1-\beta_R$. Let $\beta_{\min}$ be the minimum among the values of the quantile function appearing in all instances of L-test. Then,
\begin{align*}
 \Pr[k>\lceil d\lg (R/r)\rceil ] &\leq \sum_{i=1}^k (a(1-\beta_i))^i \leq \sum_{i=1}^{\infty} (a (1-\beta_{\min}))^i\\ &= \frac{1}{1-\alpha (1-\beta_{\min})}-1\Rightarrow \\ 
 \Pr[k\leq \lceil d\lg (R/r)\rceil ] &\geq 2-\frac{1}{1-\alpha (1-\beta_{\min})}\\ &= 1 - \frac{\gamma}{1-\gamma},
\mbox{ where } \gamma =\alpha(1-\beta_{\min}).
\end{align*} \hfill $\square$
\end{proof}

\begin{prp}\label{nballs2}
Let a convex polytope $P\subset\RR^d$, cooling parameters $r,\delta$ such that $r+\delta <1/2$, parameters $\alpha$, $N$, $\nu$, $q_{\min}$, $q_{\max}$, and $k$ be the number of bodies in MMC by \annealing\. Then, 
\[
\Pr\bigg[\bigg\lfloor\log_{\frac{1}{r+\delta}}\bigg(\frac{\vol(P)}{\vol(P_k)}\bigg)\bigg\rfloor\\ \leq k\leq \bigg\lceil\log_{\frac{1}{r}}\bigg(\frac{\vol(P)}{\vol(P_k)}\bigg)\bigg\rceil\bigg]\geq 1- \frac{\gamma_L}{1-\gamma_L} - \frac{\gamma_R}{1-\gamma_R},
\]
where $\gamma_L = \alpha (1-\beta^L_{\min})$, $\gamma_R = \alpha (1-\beta^R_{\min})$ and $\beta^R_{\min},\ \beta^L_{\min}$ are minimum among all values of the quantile function appearing in U-test and L-test, respectively.
\end{prp}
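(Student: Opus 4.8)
The plan is to separate a deterministic volumetric identity from a probabilistic bound on t-test errors, exactly in the spirit of the proof of Proposition~\ref{nballs1}.

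\emph{Deterministic core.} Under assumption (iii) the telescoping product of Equation~(\ref{teleprod2}) holds exactly, so with $P_0=P$ we have $\prod_{i=0}^{m-1} r_i = \vol(P_m)/\vol(P)$, where $r_i=\vol(P_{i+1})/\vol(P_i)$. The annealing schedule accepts $P_{i+1}$ only when both \testL($P_i,P_{i+1}$) and \testR($P_i,P_{i+1}$) succeed, which, in the absence of test errors, means $r\le r_i\le r+\delta$. Assuming for the moment that every accepted ratio lies in this interval, the product is sandwiched as $r^{m}\le \vol(P_m)/\vol(P)\le (r+\delta)^{m}$. Writing $V=\vol(P)/\vol(P_m)>1$ and taking logarithms (all of $r$, $r+\delta$, and $\vol(P_m)/\vol(P)$ lie in $(0,1)$, so the signs must be tracked) turns these two inequalities into a two-sided bound on $m$: the extreme case where all ratios equal $r$ yields the endpoint $\log_{1/r}(V)$, and the case where all equal $r+\delta$ yields $\log_{1/(r+\delta)}(V)$. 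Rounding to the nearest admissible integer, since $m\in\mathbb{N}$, produces the floor and ceiling appearing in the statement.

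\emph{Probabilistic layer.} The deterministic bound can fail only if some accepted ratio escapes $[r,r+\delta]$, which requires a t-test to err. Mirroring Proposition~\ref{nballs1}: an accepted ratio exceeding $r+\delta$ forces a type I error of \testL\ while \testR\ correctly rejects its (false) null, an event of probability at most $\alpha(1-\beta^R_{\min})=\gamma_R$ per step; symmetrically, an accepted ratio below $r$ forces a type I error of \testR\ while \testL\ correctly rejects, of probability at most $\alpha(1-\beta^L_{\min})=\gamma_L$ per step. Since violating a given side of the bound requires at least one such error among the a priori unbounded sequence of steps, I would sum the geometric series $\sum_{i\ge 1}\gamma_L^{\,i}=\gamma_L/(1-\gamma_L)$ and $\sum_{i\ge 1}\gamma_R^{\,i}=\gamma_R/(1-\gamma_R)$, exactly as in the previous proposition. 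A union bound over the two failure modes then gives $\Pr[\text{both bounds hold}]\ge 1-\gamma_L/(1-\gamma_L)-\gamma_R/(1-\gamma_R)$.

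The main obstacle I expect is matching each test-error event to the correct side of the inequality and justifying the per-step probabilities: one must argue that the events \emph{extra phase} and \emph{missing phase} are each produced by a single identifiable type I error (in \testL\ and \testR\ respectively) coupled with the companion test behaving correctly, so the per-step probabilities are genuinely $\gamma_R$ and $\gamma_L$ rather than something larger, which is what lets the geometric summation and the union bound go through cleanly. The sign bookkeeping when taking logarithms of the sandwich, and the resulting placement of floor versus ceiling (which must stay consistent with $\log_{1/r}(V)\le\log_{1/(r+\delta)}(V)$ for $V>1$), is routine but easy to get backwards, so I would verify it against a small numeric instance before finalizing.
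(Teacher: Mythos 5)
Your probabilistic layer coincides with the paper's proof: the same split into the two one-sided events, the same attribution of ``too many phases'' to a type I error of \testL\ coupled with the power of \testR\ (giving $\gamma_R$), and of ``too few phases'' to a type I error of \testR\ coupled with the power of \testL\ (giving $\gamma_L$), and the same geometric series; your union bound in place of the paper's additivity of the two events changes nothing. The deterministic core is your addition --- the paper has no such step and passes directly from ``$m$ lies outside the interval'' to ``some t-test erred.''

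However, the check you deferred (``easy to get backwards, verify numerically'') is the crux, and it goes \emph{against} the printed statement. Carrying your sandwich through: if every accepted ratio lies in $[r,r+\delta]$, then $r^m\le \vol(P_m)/\vol(P)\le (r+\delta)^m$, and writing $V=\vol(P)/\vol(P_m)$, dividing by the negative quantities $\ln r<\ln(r+\delta)<0$ gives
$$
\log_{\frac{1}{r}}(V)\;\le\; m\;\le\; \log_{\frac{1}{r+\delta}}(V),
\quad\text{hence}\quad
\bigg\lceil \log_{\frac{1}{r}}(V)\bigg\rceil\;\le\; m\;\le\;\bigg\lfloor \log_{\frac{1}{r+\delta}}(V)\bigg\rfloor,
$$
since smaller ratios shrink the volume faster and therefore need \emph{fewer} phases. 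This is the proposition's interval with its two endpoints interchanged. Because $\log_{1/r}(V)\le\log_{1/(r+\delta)}(V)$ for $V>1$, the printed event $\lfloor\log_{1/(r+\delta)}(V)\rfloor\le m\le\lceil\log_{1/r}(V)\rceil$ is an inverted interval; it is empty whenever the two logarithms differ by more than $2$, and since their quotient is the constant $\ln(1/r)/\ln(1/(r+\delta))>1$, this occurs for every instance with $V$ large (e.g.\ $r=0.1$, $\delta=0.05$, $V\ge 10^{10}$). In that regime an error-free run lands in your sandwich and hence violates the printed event with certainty, so neither your argument nor any other can establish the literal statement when its right-hand side is positive. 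The paper's proof conceals the same defect: it asserts that $m>\lceil\log_{1/r}(V)\rceil$ forces type I errors of \testL, which your sandwich refutes --- $m$ may be as large as $\lfloor\log_{1/(r+\delta)}(V)\rfloor$ with every test behaving correctly. Two fixes are needed to finalize your proof: swap the floor/ceiling endpoints (i.e.\ prove the corrected statement your deterministic core actually yields), and replace ``at least one error among unboundedly many steps'' by the convention used in Proposition~\ref{nballs1}, namely that a deviation of $j$ phases requires $j$ erroneous acceptances, each of probability at most $\gamma$; that is what licenses the $\gamma^j$ terms in the geometric series. With these repairs your proof is complete and, thanks to the explicit deterministic core, more self-contained than the paper's.
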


 \begin{proof} Let,
\begin{align*}
& \Pr\bigg[\bigg\lfloor\log_{\frac{1}{r+\delta}}\bigg(\frac{\vol(P)}{\vol(P_k)}\bigg)\bigg\rfloor\leq k \leq \bigg\lceil\log_{\frac{1}{r}}\bigg(\frac{\vol(P)}{\vol(P_k)}\bigg)\bigg\rceil\bigg] \\
&  = 1 - \Pr\bigg[k > \bigg\lceil\log_{\frac{1}{r}}\bigg(\frac{\vol(P)}{\vol(P_k)}\bigg)\bigg\rceil\ \text{or}\ k < \bigg\lfloor\log_{\frac{1}{r+\delta}}\bigg(\frac{\vol(P)}{\vol(P_k)}\bigg)\bigg\rfloor\bigg] \\ 
& = 1 - \Pr\bigg[k > \bigg\lceil\log_{\frac{1}{r}}\bigg(\frac{\vol(P)}{\vol(P_k)}\bigg)\bigg\rceil\bigg] - \Pr\bigg[k < \bigg\lfloor\log_{\frac{1}{r+\delta}}\bigg(\frac{\vol(P)}{\vol(P_k)}\bigg)\bigg\rfloor\bigg] .
\end{align*}

Similarly to the proof of Proposition~\ref{nballs1}, we have $\Pr\bigg[ k > \bigg\lceil\log_{\frac{1}{r}}\bigg(\frac{\vol(P)}{\vol(P_k)}\bigg)\bigg\rceil\bigg] \leq \frac{\gamma_R}{1 - \gamma_R}$, where $\gamma_R = \alpha(1-\beta_{\min}^R)$, and $\beta_{\min}^R$ is the minimum among all values of the quantile function appearing in L-test.

If $k<\bigg\lfloor\log_{\frac{1}{r+\delta}}\bigg(\frac{\vol(P)}{\vol(P_k)}\bigg)\bigg\rfloor$, then $k$ type-I errors of L-test occurred, where $k\geq 1$, while U-test was successful with probability $1-\beta$. This implies that $\Pr\bigg[ k < \bigg\lfloor\log_{\frac{1}{r+\delta}}\bigg(\frac{\vol(P)}{\vol(P_k)}\bigg)\bigg\rfloor\bigg] \leq \frac{\gamma_L}{1 - \gamma_L}$, where $\gamma_L = \alpha(1-\beta_{\min}^L)$, and $\beta_{\min}^L$ is the minimum among all values of the quantile function appearing in U-test. 

Putting everything together, it follows that:
\begin{align*}
\Pr\bigg[\bigg\lfloor\log_{\frac{1}{r+\delta}}\bigg(\frac{\vol(P)}{\vol(P_k)}\bigg)\bigg\rfloor\leq k \leq \bigg\lceil\log_{\frac{1}{r}}\bigg(\frac{\vol(P)}{\vol(P_k)}\bigg)\bigg\rceil\bigg]\geq 1 - \frac{\gamma_L}{1-\gamma_L} - \frac{\gamma_R}{1-\gamma_R}
\end{align*} \hfill $\square$
\end{proof}

Therefore, the number of bodies is $k=O(\lg(\vol(P)/\vol(P_k)))$ with high probability. This implies that $k$ decreases when $C$ is a good fit to $P$. Clearly, the body that minimizes the number of bodies is the one that maximizes $\vol(C_k\cap P)$.

\subsection{Ratio estimation}\label{subsec:ratio_esti}

\volalg\ requires that we estimate $k+1$ ratios, which have been formed by \annealing. 
This section describes how to perform these estimations (Algorithm~\ref{alg:ratio_estimation}). 
First, we bound the error in each ratio estimation in order to use it for the definition of the stopping criterion. From standard error propagation analysis, for each volume ratio $r_i = \vol(P_i) / \vol(P_{i-1})$ in Equation~\ref{eq:teleprod2}, we have to bound the corresponding error by $\epsilon_i$ such that 
\begin{equation}
{\sum_{i=1}^{k+1} \epsilon_i^2}=\epsilon^2 ,
\end{equation}
Then, the telescopic product in Equation~(\ref{eq:teleprod2}) approximates $\vol(P)$ with error $\le\epsilon$. In Section~\ref{sec:implementation}, we further discuss efficient error splitting. To estimate $r_i$ we generate $n$ \billiard\ points in $P_{i-1}$ and we compute the proportion of the points that lie in $P_i$. For the ratio $r_{k+1} = \vol(P_k)/\vol(C_k)$ we follow the same procedure, but we sample from $C_k$ and we count the number of point in $P_k$. Computationally, we would like to determine an as small as possible integer $n$ where we are within our target accuracy $\epsilon_i$.

The main issue that one should address, is that \volalg\ generates correlated samples/points in the polytope $P$ using \billiard. Therefore, the best known bounds
for the number of \billiard\ points required are far too large for practical computations. Thus, to estimate $r_i$, we use a sliding window to keep the last $l$ estimation values of the ratio. That is a queue with length $l$; each time a new sample point is generated by inserting the new ratio value of $\hat{r}_i$ and by popping out the oldest ratio value. 

To determine an empirical stopping criterion, first, we assume that we have $n$ i.i.d.\ uniformly distributed samples in $P_{i-1}$. Then, and we employ the binomial confidence interval for the estimator of $r_i$ to bound $n$. The number of points in $P_i$ follows the binomial distribution $b(n,r_i)$. A binomial proportion confidence interval for $\hat{r}_i$ ---the current approximation to $r_i$--- is given by $\hat{r}_i \pm z_{\alpha /2} \sqrt{\frac{\hat{r}_i(1-\hat{r}_i)}{n}}$,
where $z_{\alpha /2}$ is the $1-\alpha /2$ quantile of the Gaussian distribution. 
As $n$ increases, the interval tightens around $\hat{r}_i$, thus, stopping for that value of $n$ where 
$$
\frac{z_{\alpha /2} \sqrt{\hat{r}_i(1-\hat{r}_i)/n}}{\hat{r}_i - z_{\alpha /2} \sqrt{\hat{r}_i(1-\hat{r}_i)/n}} \leq \epsilon_i ,
$$
obtains an estimation of $r_i$ within error $\epsilon_i$ with probability $(1-\alpha)$, and  Equation~(\ref{eq:teleprod2}) would estimate $\vol(P)$ up to at most $\epsilon$ with probability $(1 - \alpha)^{k+1}$.


\begin{algorithm}[t]
  \caption{\estiratio$(P_1,P_2, \epsilon, k, l)$}
	\label{alg:ratio_estimation}
	\SetKwInOut{Input}{Input}
	\SetKwInOut{Output}{Output}
    \SetKwInOut{Require}{Require}
    \SetKwRepeat{Do}{do}{while}
	
    \Input{bodies $P_1, P_2$; requested error $\epsilon$; number of bodies in Multiphase Monte Carlo; length of the sliding window $l$.}

    
	\Output{An estimation of $\frac{\vol(P_2)}{\vol(P_1)}$}
    
    \BlankLine
    $p \leftarrow 1 - \sqrt[k+1]{3/4}$\;
    $convergence \leftarrow ${\tt false}\;
    $j \leftarrow 0$, $count\_in \leftarrow 0$\;
    set the queue $W$ to be the sliding window\;
    \Do{$convergence$ is {\tt false}} {
    	$j \leftarrow j+1$\;
    	Generate an approximate uniform point $q_j$ from $P_1$ using \billiard\;
    	\If{$q_j\in P_2$} {
    		$count\_in \leftarrow count\_in + 1$\;
    	}
    	$r_j \leftarrow \frac{count\_in}{j}$\;
    	push $r_j$ to $W$\;
    	\If{$W \text{ has } l \text{ elements}$} {
    		$s\leftarrow std(W)$\;
    		$a \leftarrow r_j-z_{p/2}s$\;
    	    $b \leftarrow r_j+z_{p/2}s$\;
    		\leIf{$(b-a)/a\leq \epsilon /2$} {
    			$convergence \leftarrow$ {\tt true}\;
    		}
    		{
    			pop $1_{st}$ element of $W$
    		}
    	}
    }
    \KwRet $r_j$\;
\end{algorithm}

Notice that $\sqrt{\hat{r}_i(1-\hat{r}_i)/n}$ is an estimator of the st.d.\ of all sample fractions of size $n$. Thus, to define an empirical criterion we replace that quantity with the st.d., $s$, of the last $l$ ratio values, i.e.\ the ratios stored in the sliding window. 
Moreover, we consider the average, $\bar{r}$, of the ratios in the sliding window. Finally, we stop sampling when $\bar{r}$ and the st.d.\ $s$ meet the criterion of Equation~(\ref{eq:convergence_criterion}): then we say they meet convergence. 
Clearly, for the first $l$ points sampled, we do not check for convergence. For a probability $p$ sufficiently close to $1$, the empirical criterion for declaring convergence is as follows:
\begin{align}\label{eq:convergence_criterion}
\frac{(\bar{r}+z_{p/2}s) - (\bar{r}-z_{p/2}s)}{\bar{r}-z_{p/2}s} = \frac{2z_{p/2}s}{\bar{r}-z_{p/2}s}\leq \frac{\epsilon_i}{2}.\;
\end{align}

In section~\ref{sec:implementation} we experimentally show that this empirical rule suffice to achieve error $\epsilon$ for $p=1-\sqrt[k+1]{3/4}$, and for $l = O(1)$, i.e.\ when the length of the sliding window is constant ---independent from the dimension.

\section{Implementation and experiments}\label{sec:implementation}

In this section we discuss our implementation. We perform extended experiments analysing various aspects such as the tuning of the algorithm's parameters and its practical complexity. Finally, we apply our software to compute the volume of high-dimensional Birkhoff polytopes, and we test the quality of approximation of various methods for low-order reduction of Z-polytopes.

We use the {\tt eigen} library~\cite{eigenweb} for linear algebra and {\tt lpsolve}~\cite{lpsolve} for solving linear programs.
All experiments were performed on a PC with {\tt Intel® Core™ i7-6700 3.40GHz 8~CPU} and {\tt 32GB RAM} running {\tt Ubuntu 18}.
Runtimes reported in the plots and tables are averaged over~10 runs unless otherwise stated.
We denote by \cg\ the implementation of \cite{Cousins15}.
Brief instructions for how to run the implementation and reproduce the computational results of this article are described in the Appendix (Section~\ref{apx:tutorial}).
All experiments are run without multi-threading. 

It is of special interest to create and maintain a database of convex polytopes to evaluate the performance and the accuracy of various algorithms for sampling and volume approximation. 
Our polytope database, presented in Table~\ref{tbl:polydatabase}, is constructed by merging and extending the databases used in~\cite{Cousins15, VolEsti, vinci}. 

For the special case of polytope that comes from a metabolic network, the polytope is given as a set of equalities and inequalities; that is a low-dimensional polytope. To obtain the full dimensional polytope $P\subset\RR^d$ we follow the preprocessing in~\cite{chalkis2020geometric} which we compute with library {\tt cobra}~\cite{cobra20}.

The estimation of the telescopic product in Equation~(\ref{eq:teleprod2}) might lead to numerical overflows and underflows in floating-point arithmetic. To avoid that our implementation estimates the following expression,
\begin{equation}\label{eq:log_sum}
\log(\vol(P)) = \log(\vol(C_k)) - \sum_{i=1}^k \log({r_i}) +\log(r_{k+1}) . 
\end{equation}

To evaluate the efficiency of our implementation, when the input polytope is a V- or a Z-polytope, we count the number of reflections (or boundary oracle calls) that \billiard\ performs to achieve a relative error $\leq \epsilon$. 
This is the number of Linear Programs (LP) in Equations~(\ref{eq:z_poly_boc},~\ref{eq:v_poly_boc}) that our implementation solves. 
When the input polytope is a H-polytope, reflections have different complexities i.e.\ the first takes $\OO(md)$ while the rest $\OO(m)$ operations. 
Therefore, we count the number of points that \billiard\ generates to achieve a relative error $\leq \epsilon$.

\begin{table}[t]
\begin{tabular}{p{3cm}p{5.5cm}cc}
Polytope & Definition & H-rep. & V-rep.\\\hline\hline
cube-$d$ &  $\{x=(x_1,\dots,x_d)\, |\, -1\leq x_i\leq1, x_i\in\RR,\, i=1,\dots,d \}$ & $\checkmark$ & $\checkmark$\\\hline
cross-$d$ & cross polytope, the dual of cube, i.e.\
$\mbox{conv}(\{-e_i,e_i : \, i=1,\dots,d\})$ & $\checkmark$ & $\checkmark$\\\hline
$\Delta$-$d$ & $d$-dimensional simplex $\mbox{conv}(\{e_i : \, i=1,\dots,d\})$  & $\checkmark$ & $\checkmark$\\\hline
$\Delta$-$d$-$d$ & the product of two simplices, i.e.\ $\{ (p,p')\in\RR^{2d}\ |\ p\in\Delta -d,\ p'\in\Delta -d \}$,  & $\checkmark$ & \\\hline
 $\mathcal{B}_n$ Birkhoff polytope & $\{ x\in\RR^d\ |\ x_{ij}\geq 0,\ \sum_i x_{ij}=1,\ \sum_j x_{ij}=1,\ 1\leq i,j\leq n \}$  & $\checkmark$ & \\\hline
 $E$-$n$-$s$ Everest polytope & see \cite{Kerber17}  &  & $\checkmark$\\\hline
 $K$-$d$ dual Knapsack polytope & $\mbox{conv}(\pm e_1, \dots , \pm e_d, \mathbf{\alpha})$, where $\mathbf{\alpha}\in\mathbb{Z}^d_{\geq 0}$ is generated randomly, &  & $\checkmark$\\\hline
 $cc$-$8$-$k$ & 8d product of two 4d cyclic polyhedra   &  & $\checkmark$\\\hline
 $ccp$-$k$ & complete cut polytope on $k$ vertices  & & $\checkmark$\\\hline
 rvc-$d$-$n$ & polytope with $n$ vertices in $[-1,1]^d$: generate uniform points in $[-1,1]^d$, stop when $n$ lie in convex position  &  & $\checkmark$\\\hline
 rvs-$d$-$n$ & polytope with $n$ vertices uniformly distributed on the unit sphere   & $\checkmark$\\\hline
 rhs-$d$-$n$ & polytope with $n$ facets, with normals uniformly distributed on the unit sphere  & $\checkmark$ & \\\hline
 metabolic polytopes & polytopes that correspond to the flux space of a metabolic network~\cite{chalkis2020geometric,CousinsChnr}, see Table~\ref{tab:h_poly_vol} & $\checkmark$ &\\\hline
$Z_{\mathcal{U}}$-$d$-$n$ & $n$ generators, length of each generator selected uniformly from $[0,100]$ && \\\hline
$Z_{\mathcal{N}}$-$d$-$n$ & $n$ generators, length of each generator from $\mathcal{N}(50,(50/3)^2)$ truncated to $[0,100]$ && \\\hline
$Z_{Exp}$-$d$-$n$ & $n$ generators, length of each generator from $Exp(1/30)$ truncated to $[0,100]$ && \\\hline
\end{tabular}
\caption{The polytope database, where $d$ is the dimension,  
$e_1, \dots , e_d$ are the standard basis vectors in $\RR^d$.
The direction of each generator for Z-polytopes: $Z_{\mathcal{U}}$-$d$-$n$, $Z_{\mathcal{N}}$-$d$-$n$, $Z_{Exp}$-$d$-$n$, is chosen uniformly from the $(d-1)$-dimensional unit hypersphere.
\label{tbl:polydatabase}}
\end{table}

\subsection{Parameter tuning}\label{subsec:tuning}

Let us explain how we fine-tune the volume algorithm and the sampling procedures presented in Sections~\ref{sec:sampling}, and~\ref{sec:method}.

To start sampling from each body $P_i=C_i\cap P$ in MMC of Equation~(\ref{eq:teleprod2}) we use a central point of $P$. For H-polytopes we use the center of the Chebychev ball, i.e.\ the largest inscribed ball in $P$, which requires to solve a linear program \cite{Boyd04}. For V-polytopes, we compute an approximation of the minimum-volume enclosing ellipsoid of the vertices \cite{Todd07} and use its center. For Z-polytopes we use the center of symmetry as they are centrally symmetric convex bodies.

\subsubsection{Billiard walk's parameters}\label{subsubsec:sampling}

For \billiard\ first we have to set the parameter $\tau$ that controls the length of the linear trajectory of each step of the random walk. 
In~\cite{POLYAK20146123} they suggest to set $\tau$ equal to the diameter of the body we are about to sample. 
Since computing the diameter could be hard, we are suggesting the following heuristics depending on the representation of the input convex body.

\begin{enumerate}
\item For the intersection of a polytope $P$, given in any representation, with a ball $B$, we set $\tau$ equal to the diameter of $B$. 
\item
For H-polytopes computing the diameter is hard even by randomized algorithms~\cite{Brieden98}. Thus, we set $\tau=4*\sqrt{d}r$, where $r$ is the radius of the largest inscribed (Chebychev) ball in $P$.
\item
For V-polytopes, the computation of the diameter is straightforward and takes $O(dn^2)$ operations, where $n$ is the number of vertices. 
\item For Z-polytopes the diameter can be computed by a non-convex optimization problem:
\begin{equation}\label{eq:nonconv}
\min\ -\|x\|_2^2 = -\|Gy\|_2^2 = -y^TG^TGy,
\text{ subject to: } y\in [-1,1]^n .
\end{equation}
This optimization is NP-hard. However, we use the following heuristic: a) compute the covariance matrix of the generators of $P$, b) compute its eigenvector $w$ that corresponds to the maximum eigenvalue, c) compute $\max \langle w,x \rangle,\ x\in P$, in $O(dn)$, where $n$ is the number of generators. 
\item To compute the diameter of the intersection of a Z-polytope with a scaled copy of the polytope computed by Algorithm~\ref{alg:z_approx}, i.e.\ $Z-approx: =\{x\in\RR^d\ |\ Mx\leq b \}$, the program in Equation~(\ref{eq:nonconv}) becomes:
\begin{equation}\label{eq:nonconv2}
\begin{split}
\min\ -\|x\|_2^2 = -\|Gy\|_2^2 = -y^TG^TGy,
\text{ subject to: } y \in [-1,1]^d\cap \Phi,\\ \text{ where }\Phi :=\{y\in\RR^d\ |\ MTy\leq  b\} .
\end{split}
\end{equation}
In practice, we sample a set $S$ of $O(d^2)$ points from $[-1,1]^d\cap \Phi$, with CDHR, keeping the boundary points, and return $2\max\limits_{x\in S} ||x||^2_2$.
\end{enumerate}
Both heuristics for Z-polytopes and $P\cap Z-approx$ return better solutions in our experiments than solvers like {\tt Sequential QP}~\cite{NLoptSQP} from package {\tt NLopt} for the optimization problems of Equations~(\ref{eq:nonconv}) and~(\ref{eq:nonconv2}).

The second parameter of \billiard\ is the upper bound on the number of reflections $\rho$. In our experiments, the average number of reflections in a \billiard's step increases linearly or sub-linearly with the dimension (see Figures~\ref{fig:zono_compl},~\ref{fig:steps_comparison}, and~\ref{fig:vpoly_compl}). Thus, we tested upper bounds given by a linear function $\sim \gamma d$ for some $\gamma\in\mathbb{N}$. 
We choose to set $\gamma = 20$ by following the experimental evidences in Figures~\ref{fig:vpoly_compl},~\ref{fig:zono_compl} where $\gamma < 16$ up to dimension $100$.

Last, we have to set the walk length, which is the number of \billiard's steps we burn until we use a point for ratio estimation. 
We aggressively set this to $1$ following the observation in~\cite{Cousins15} that the fastest convergence of the empirical distribution happens for this particular value of walk length. 

Interestingly, in our experiments, the number of points that \billiard\ with walk length equals to $1$ generates to estimate each volume ratio, depends only on the number of bodies $k$ in MMC and not on the dimension $d$. 
For example, in Figure~\ref{fig:zono_compl} and Figure~\ref{fig:vpoly_compl} (down left plots), when \volalg\ fixes the same number of bodies in MMC, for different dimensions, \billiard\ generates the same number of points to achieve a relative error $\leq 0.1$. Consequently, the mixing rate of \billiard\ does not depend on the dimension on those instances.


\subsubsection{Annealing schedule's parameters}

For the cooling parameters in Section~\ref{subsec:annealing}, we set $r=0.1$ and $\delta=0.05$ and thus each volume ratio would be restricted in $[r,r+\delta]=[0.1,0.15]$ with high probability. 
We choose the significance level to be $\alpha =0.10$. A smaller $\alpha$ can be chosen for a tighter test around $r+\delta$ which will result to more iterations in each step of annealing schedule.
For larger $\alpha$ values the number of iterations could be reduced but the method becomes unstable, as the probability that \annealing\ fails to terminate increases while $\alpha$ increases as well (see Theorem~\ref{thm:Terminate}). 

To set the parameters $\nu, N$ we follow the empirical rule of Remark~\ref{Rrule}. Thus, assuming perfect uniform sampling and $r_i=0.1$, one has to set $N\geq 112$. At the initialization step of the annealing schedule we have to sample from a ball we set $N=120$ as we employ perfect uniform sampling from a ball. Otherwise, when \billiard\ is used in all the other cases of uniform sampling from a convex body we set $N=125$. We also set $\nu=10$ to keep the product $\nu N$ as small as possible. Smaller $\nu$ values would result to more unstable iterations in \annealing. Hence, following well known empirical rules we use t-student distribution in the t-tests.  
We apply the following optimization, for the $i$-th volume ratio, we sample from $P_i$ and if the stopping criterion fails, we employ binary search by reusing the same sample. Hence we sample only once the $\nu N$ points per step of \annealing. 

\annealing\ also has to compute $q_{\min},q_{\max}$ such that $q_{\min}C \subseteq P\subseteq q_{\max}C$. Ideally, the first would correspond to the largest inscribed scaled copy of the body $C$ in the polytope $P$. The second to the minimally scaled copy of $C$ that encloses $P$. In our implementation we set $q_{\min}=0$ and to set $q_{\max}$ we sample $N\nu$ points and compute $q_{\max}$ such that all points belong to $q_{\max}C$. In the special case of $C$ being the H-polytope of Algorithm~\ref{alg:z_approx} we compute the scaling factor $q_{\max}$ to compute the polytope $q_{\max}C$ that tightly encloses the Z-polytope $P$ and we set $q_{\min}=0$.

\subsubsection{Error splitting and sliding window}\label{subsubsec:error_splitting}

\begin{figure}[t]
\centering
\includegraphics[width=0.6\textwidth]{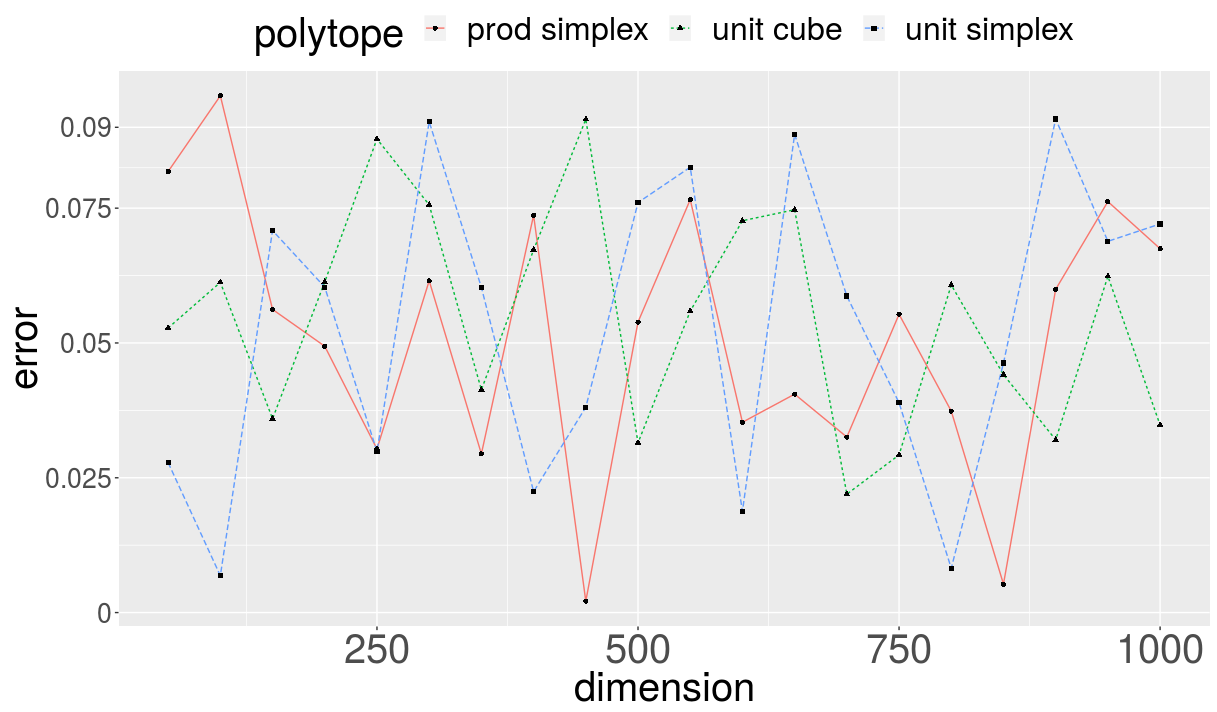}
\caption{The average errors of \volalg\ for cube-$d$, $\Delta$-$d$ and $\Delta$-$d$-$d$. This figure shows that the algorithm does not exceeds the input error parameter $\epsilon =0.1$ in practice.\label{fig:hnr_errors}}
\end{figure}

Recall that when we estimate each ratio in Multiphase Monte Carlo (MMC), we have to do an error splitting for the volume ratios that appear in the telescopic product in Equation~(\ref{eq:teleprod2})---i.e.\ given a requested error $\epsilon$ for the algorithm compute an error $\epsilon_i$ for each volume ratio $r_i$ (see Section~\ref{subsec:ratio_esti}), such that,
$$\sum_{i=1}^{k+1}\epsilon_i^2 = \epsilon^2.$$

We do not split the error $\epsilon$ equally to all ratios. In particular, we set the requested error $\epsilon_{k+1}=\epsilon/2\sqrt{k+1}$ for the last ratio $r_{k+1} = \vol(P_k)/\vol(C_k)$ to be the smallest ratio in the telescopic product, where $k$ is the number of bodies in MMC and the rest errors $\epsilon_i,\ i=1,\dots, k$ are set to be equally weighted. 
The reason is that $\vol(P_k)/\vol(C_k)$ converges faster than the other ratios in practice. The latter occurs because sampling from $C_k$ is usually faster and more accurate than sampling from any other $P_i$, 
e.g.\ when the body $C$ used in MMC is either a ball while $P_i$ is the intersection of $P$ with a ball, or an H-polytope while $P$ is a Z-polytope. 

Then we have to split $\epsilon '=\epsilon\sqrt{4(k+1)-1}/2\sqrt{k+1}$ to the remaining ratios, since $\epsilon_{k+1}^2 + \epsilon '^2 = \epsilon^2$. Thus we set $\epsilon_i=\epsilon '/\sqrt{k},\ i=1,\dots ,k$ so that ${\sum_{i=0}^{k+1} \epsilon_i^2}=\epsilon^2$ holds.

In the special case of $P$ being a Z-polytope, the body $C_k$ is an H-polytope. We estimate $\vol(C_k)$ by calling \volalg\ using balls in MMC, while the computational time of $\vol(C_k)$ is a small portion of the total runtime. For the computation of $\vol(C_k)$ we set the requested error $e''=\epsilon/2\sqrt{k+1}$ and then we equally split $\epsilon ' =\epsilon \sqrt{2k+1}/\sqrt{2k+2}$ to the $k+1$ ratios respecting ${\sum_{i=0}^{k+1} \epsilon_i^2}=\epsilon^2$.

For the length of the sliding window, we set $l=250$. Our experiments on the error show this choice offers stability, e.g.\ in Figure~\ref{fig:hnr_errors}   \volalg\ always compute a smaller error than the requested error $0.1$ for unit cubes, unit simplices and cross polytopes never exceeds the requested value for $d\leq 1000$.

For each new generated point, we update the average mean and the variance of the sliding window in $O(1)$  as follows. Let $\hat{\mu}$ be the average mean of the ratios in the sliding window, then we write the variance as \[\frac{1}{l}\sum_{i=1}^{l}(\hat{r}_i-\hat{\mu})^2=\frac{1}{l}(\sum_{i=1}^{l}\hat{r}_i^2-2\hat{\mu}\sum_{i=1}^{l}\hat{r}_i+l\hat{\mu}^2).\]
We store the sum of the window's ratios $\sum_{i=1}^{l}\hat{r}_i$ and the sum of the squared ratios $\sum_{i=1}^{l}\hat{r}_i^2$. For each new generated point we obtain an updated ratio and the oldest ratio is popped out. We use both the updated and the popped out ratios to update both $\sum_{i=1}^{l}\hat{r}_i^2$ and $\sum_{i=1}^{l}\hat{r}_i$ and to compute the updated average mean value and st.d. of the current ratios in the sliding window.   

In the sequel, we benchmark the proposed algorithm \volalg\ tuned as described in the previous sections. 
We present the result obtained from our experiments for each polytope representation separately.

\begin{figure}[!t]
\begin{minipage}[h]{0.49\textwidth}
\includegraphics[width=\linewidth]{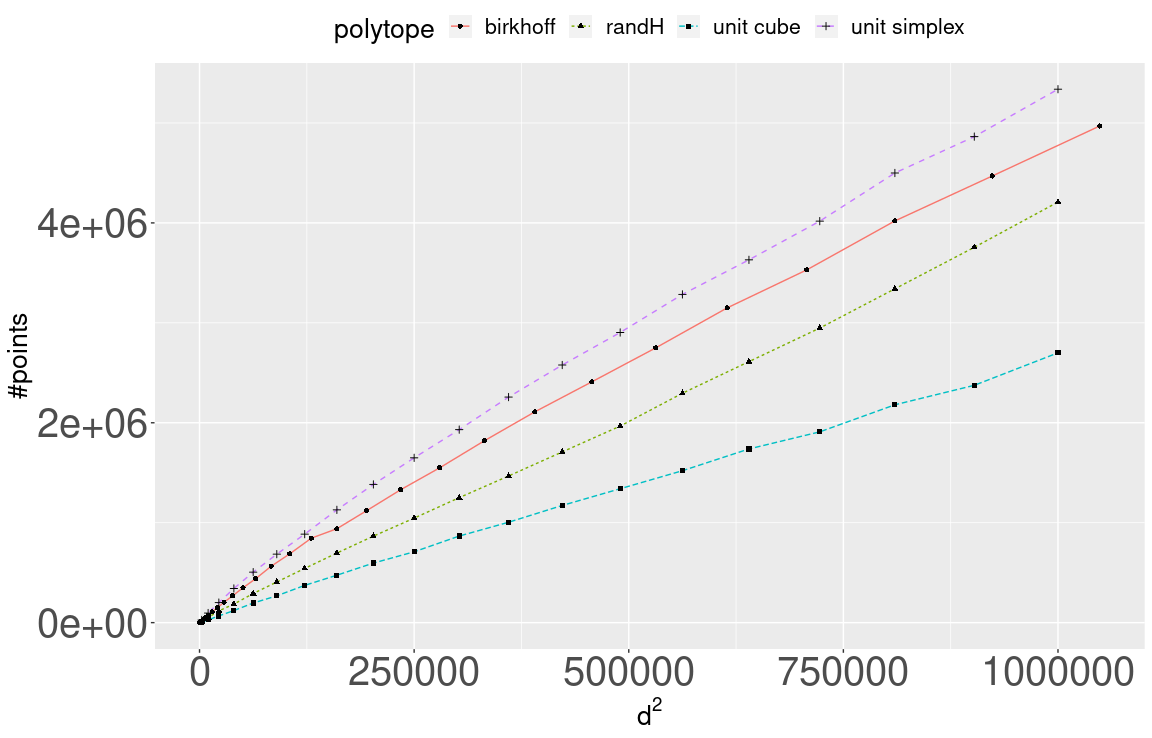}
\end{minipage}
\begin{minipage}[h]{0.49\textwidth}
\includegraphics[width=\linewidth]{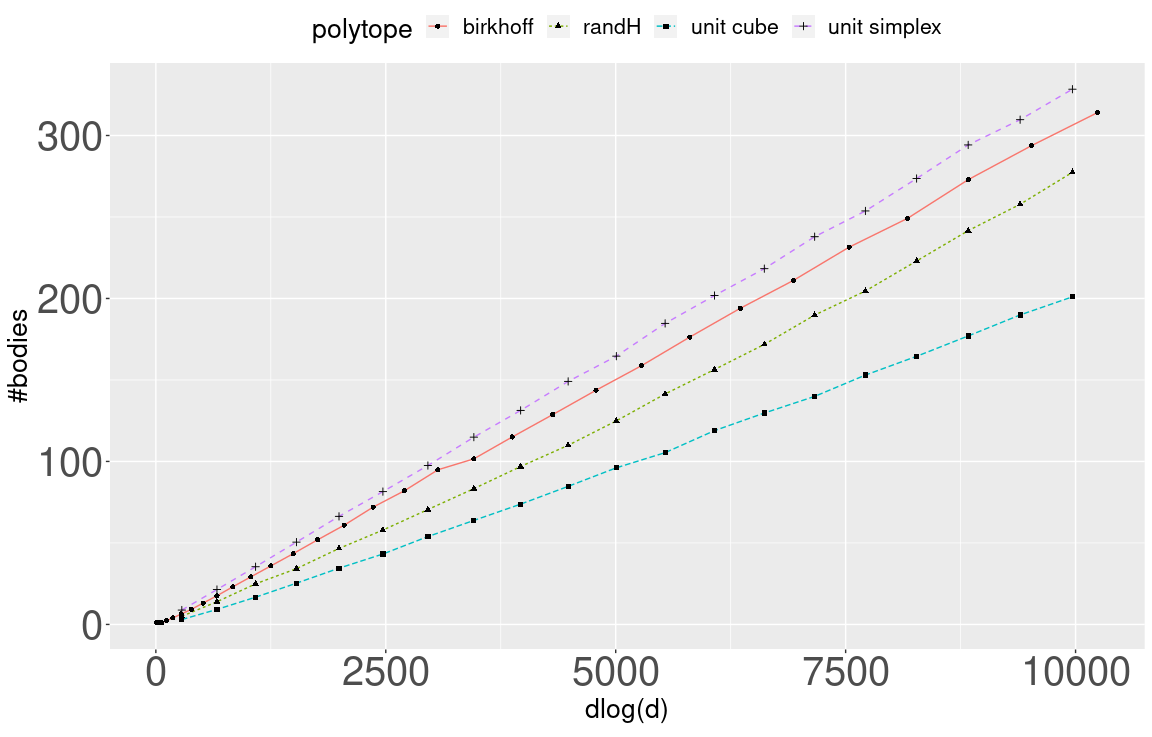}
\end{minipage}
\caption{Experimental complexity of \volalg\ for H-polytopes: the number of points that \billiard\ generates (left) and the number of bodies computed in MMC (right). We consider the cases of cube-$d$, $\Delta$-$d$ and rhs-$d$-$m$ with $m=5d$ and $d=50,100,\dots ,1000$. We set the error parameter $\epsilon = 0.1$ as input for the computations. \label{fig:steps_cb_cube_simplex_randh}}
\end{figure}

\subsection{Volume experiments for H-polytopes}\label{subsec:h_poly_experiments}

The implementation of our algorithm, \volalg, scales up to thousands of dimensions within a few hours. In Tables~\ref{tab:h_poly_vol},~\ref{tab:birkhoff_vol} we report the runtime of \volalg\ for cube-$d$, $\Delta$-$d$, $\Delta$-$d$-$d$, $rhs$-$m$-$d$, metabolic polytopes and $\mathcal{B}_n$, while \volalg\ achieves relative error equal at most $0.1$. Except of cube-$d$, $\Delta$-$d$, $\Delta$-$d$-$d$ we apply the rounding preprocess on Section~\ref{sec:rounding} before volume estimation. Our implementation takes for $d = 100$ a few seconds and for $d = 500$ a few minutes except for {\tt iSDY\_1059} that takes almost an hour as it has a larger number of facets (i.e.\ $m\approx 3000$). 
When $d=1000$ our implementation takes a few hours, while the runtime increases with the number of facets. However, the runtime for cube-$1000$ is smaller than the $\Delta$-$1000$ as the unit simplex has a larger isotropic constant than the unit hypercube; consequently the number of bodies in MMC is larger for the unit simplex. 
Our implementation is the first one that estimates the volume of very high dimensional Birkhoff polytopes extending the computational results of~\cite{VolEsti,Cousins15}.
In particular, the $\mathcal{B}_n$ of order $16\leq n\leq 33$ with dimension $225\leq d\leq 1024$ are computed in $\leq 4$hr.

To evaluate the performance of our implementation of \volalg\ we count the number of points that \billiard\ generates for the cube-d, $\Delta$-d, rhs-$d$-$m$ and $\mathcal{B}_n$. First, for the number of bodies in MMC we notice in Figure~\ref{fig:steps_cb_cube_simplex_randh} that it grows as $O(d\lg d)$ which agrees with the analysis in Proposition~\ref{nballs2}. Then, due to the error splitting of Section~\ref{subsubsec:error_splitting} the generated points per volume ratio grows as $O^*(d)$. Thus, the total number of \billiard\ points grows as $O^*(d^2)$ for those polytopes (left plot in Figures~\ref{fig:steps_cb_cube_simplex_randh}). 
Thus, the total run-time of our implementation grows as $O(d^3m)$, because the per step cost of \billiard\ is $O(md)$.

For the rounding method of Section~\ref{sec:rounding} we experimentally evaluate its quality by counting the number of generated bodies in MMC. In particular, given a skinny polytope $P$ we round it and then we compare the number of bodies in MMC with that of isotropic polytopes (e.g.\ cube-$d$, $\Delta$-$d$-$d$). In Figure~\ref{fig:steps_cb_cube_simplex_randh}~(right) for  rhs-$d$-$m$ and $\mathcal{B}_n$ we notice that for both rounded and isotropic bodies the number of bodies grows as $O(d\lg d)$ which is a strong evidence that our rounding method transforms the polytope to a near isotropic position.

\begin{figure}[!h]
\begin{minipage}[h]{0.47\textwidth}
\includegraphics[width=\linewidth]{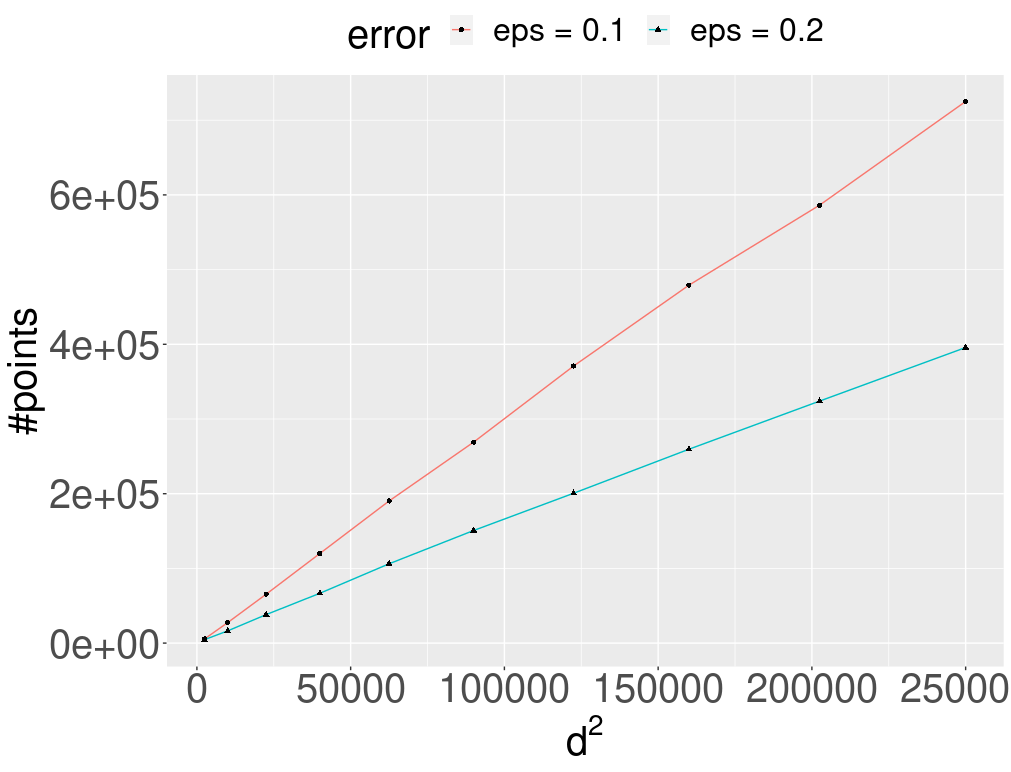}
\end{minipage}
\begin{minipage}[h]{0.47\textwidth}
\includegraphics[width=\linewidth]{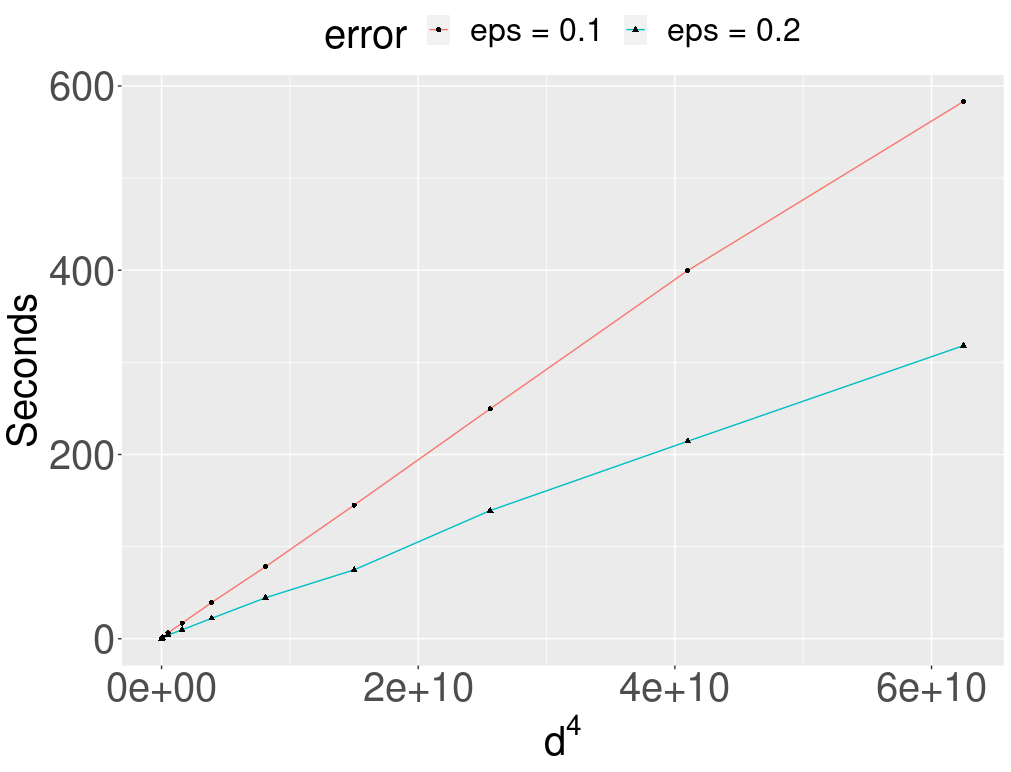}
\end{minipage}
\caption{Perfomance for different error parameters. \label{fig:steps_cb_cube_errors}}
\end{figure}

Next, we experimentally study how the complexity of our implementation depends on the input error parameter. In Figure~\ref{fig:steps_cb_cube_errors} we estimate the volumes of cube-$d$ for $d = 50,100,\dots, 500$ for two different values for the error parameter $\epsilon = 0.1,\ 0.2$. In the left plot we notice that for both values the number of generated points by \volalg\ increase as $O(d^2)$, while a smaller value of $\epsilon$ just increases the slope of the line, i.e.\ increases the run-time by a constant. The right plot also confirms this observation; it shows that the run-time of \volalg\ increases as $O(d^4)$ for cube-$d$. This is expected as the number of points increases as $O(d^2)$ and the per step cost of \billiard\ is $O(md)$, while $m=O(d)$ for cube-$d$.

Finally we examine the time spend in different steps in the algorithm i.e.\ rounding, constructing the sequence and volume computation. We conclude that those are highly depended on the convex body. For example for a 100-cube the times for the sequence construction and volume computation is 2.06836 and 4.29843 secs respectively. While for a 100-simplex those numbers are 2.85144 and 17.2575. Intuitively this effect is related to the isotropy of the convex set. 
Regarding preprocessing/rounding this also depends on the geometry of the body. For example, the rounding of the unit 100-cube takes 0.01427 secs while for a skinny 100-cube takes 6.92041 secs. 

\subsubsection{Comparison with Cooling Gaussian}

We compare our implementation, i.e.\ \volalg, against the state-of-the-art MATLAB implementation of \cg\ in~\cite{Cousins15} for cube-$d$, simplices $\Delta$-d and random polytopes rhs-$d$-$m$ with $m=5d$. 
In Figure~\ref{fig:steps_comparison}, we report the total number of generated points for both algorithms to achieve a relative error $\leq 0.1$. 
\cg\ performs computations up to $d=500$ while \volalg\ compute up to $d=1000$ in a shorter time frame. 
More interestingly, the ratio between the number of points generated by \cg\ over the number of points generated by \volalg\ increases linearly with the dimension. Then, the comparison of the run-times is determined by the per step cost of the random walks used in each implementation. 

The plot in Figure~\ref{fig:cb_cg_runtime_comparison} 
compares the run-times of both implementations for cube-$d$, $\Delta$-$d$ in H-representation (left plot), and rhs-$d$-$m$ with $m=5d$ and $d=50,100,\dots ,1000$. It confirms that \volalg\ is faster than \cg\ in~\cite{Cousins15}. Moreover, notice that the gap on the run-time increases with the dimension as expected. In particular, for $d=50$ our implementation is $20$, $2$ and $3$ times faster for cube-$d$, $\Delta$-$d$ and rhs-$d$-$m$ respectively. For $d=500$ our implementation is $95$, $8$ and $10$ times faster for the same classes of polytopes. For $d=1000$ we use polynomial interpolation to estimate the runtime of \cg, and our implementation is expected to be around $130$, $13$ and $20$ times faster for cube-$d$, $\Delta$-$d$ and rhs-$d$-$m$ respectively.

\begin{figure}[!h]
\centering
\includegraphics[width=0.49\textwidth]{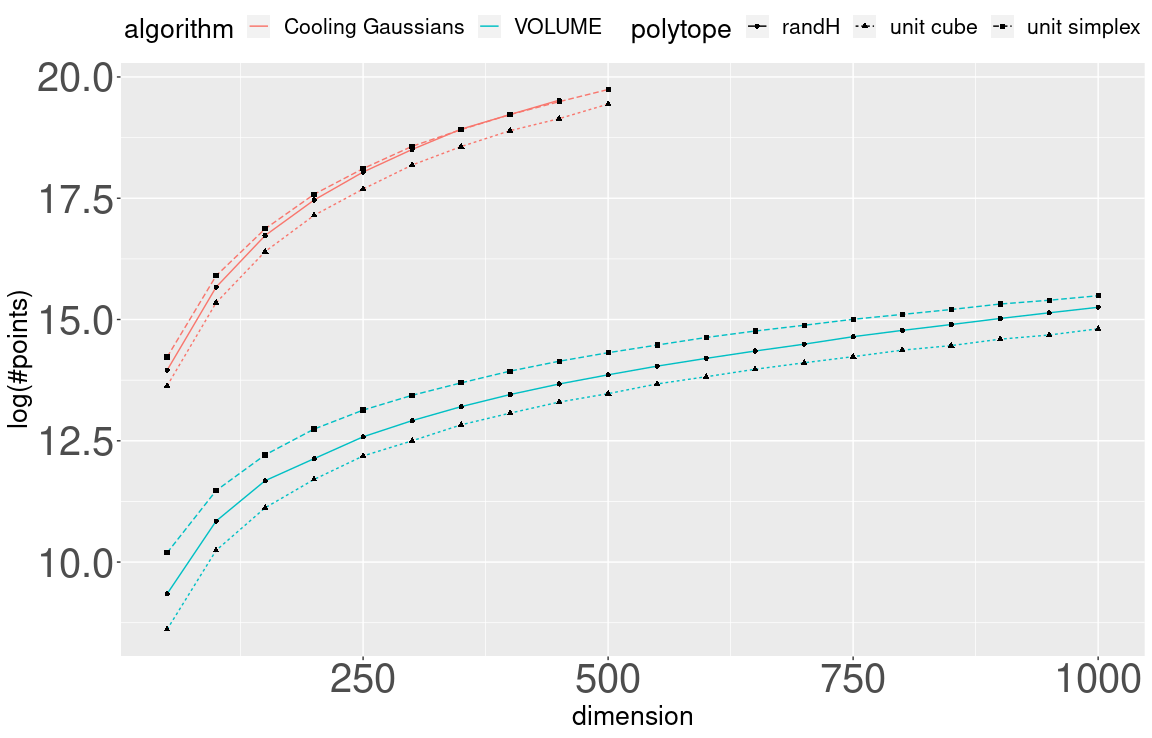}
\includegraphics[width=0.49\textwidth]{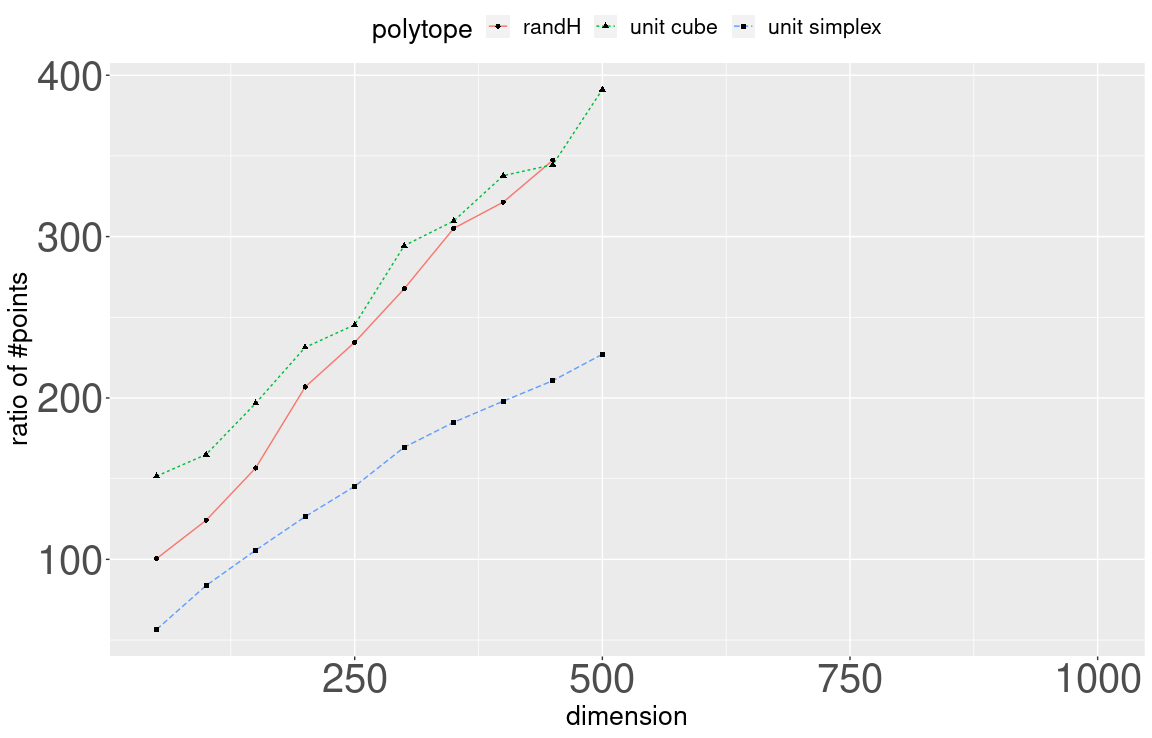}
\caption{Comparison of \volalg\ vs. \cg, i.e.\ our implementation compared the implementation in~\cite{Cousins15}. We consider the cases of randomly rotated cube-$d$, $\Delta$-$d$ in H-representation, and rhs-$d$-$m$ with $m=5d$ and $d=50,100,\dots ,1000$. 
For each class of polytopes we set the 12 hours as a time limit for the run-time of both algorithms. The runtime of \volalg\ never exceeds the 5 hours for all instances.
The average number of generated points by both implementations in log-scale (left) and the ratio of those numbers (right). In those experiments, we set the error parameter $\epsilon = 0.1$ for both implementations as input for the computations.
\label{fig:steps_comparison}}
\end{figure}

\begin{figure}[!h]
\centering
\includegraphics[width=0.49\textwidth]{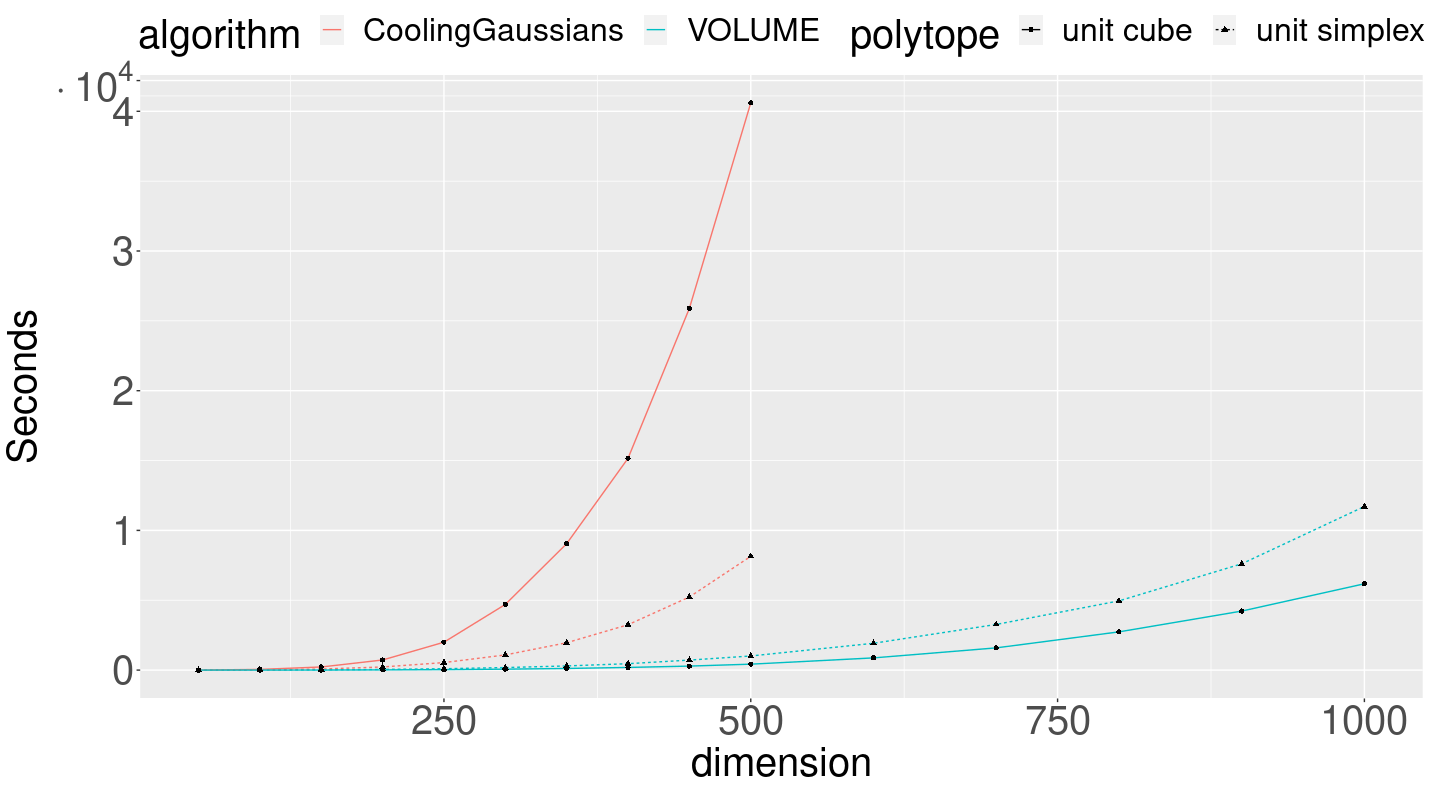}
\includegraphics[width=0.49\textwidth]{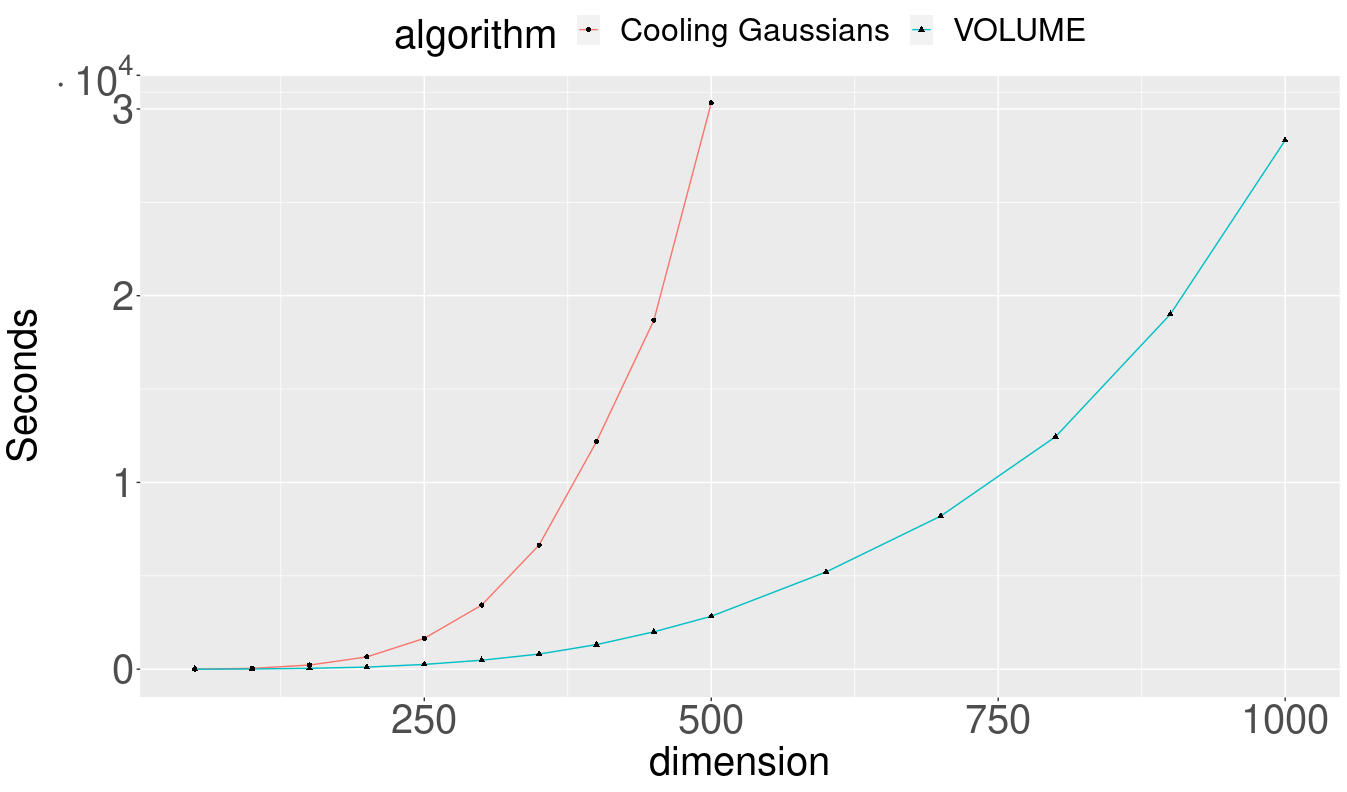}
\caption{Comparison of average run-time of \volalg\ vs. \cg, i.e.\ our implementation compared the implementation in~\cite{Cousins15}. We consider the case of randomly rotated cube-$d$, $\Delta$-$d$ in H-representation (left plot), and rhs-$d$-$m$ with $m=5d$ and $d=50,100,\dots ,1000$ (right plot).  
In those experiments, we set the error parameter $\epsilon = 0.1$ for both implementations as input for the computations.
\label{fig:cb_cg_runtime_comparison}}
\end{figure}

\subsection{Volume experiments for Z-polytopes}\label{sec:zpoly}


Our implementation is the first one that scales up to hundreds of dimensions in order of hours for Z-polytopes. 
In Table~\ref{fig:zono} we report the runtimes for both high and low order Z-polytopes. 
Our implementation, for a $100$-dimensional Z-polytope of order equal to 2 
performs $6.37\cdot 10^4$ reflections (equivalent to boundary oracle calls in \cite{Cousins15}) in less than an hour. 
Considering high order Z-polytopes our implementation scales up to order $150$ for $d\leq 15$ and to order $100$ and $20$ for $d=20$ and $d=30$ respectively in at most 
half an hour by performing $5.56\cdot 10^3$ reflections. 

The only alternative implementation for Z-polytopes is that in~\cite{Cousins15}, however they estimate volumes only for low dimensional Z-polytopes, while our implementation is undoubtedly superior. 
Since, we were unable to reproduce the results reported on~\cite{Cousins15} for Z-polytopes we compare against the data given in~\cite{Cousins15}. 
In particular, for a 2-order, $10$-dimensional Z-polytope our implementation performs $9.79\cdot 10^3$ reflections (boundary oracle calls) in $2.7$ seconds to achieve a relative error smaller than $0.1$; the implementation of \cg\ in~\cite{Cousins15} generates $9.58\cdot 10^4$  CDHR points which corresponds to $1.92\cdot 10^5$ boundary oracle calls in $3010$ seconds to achieve a relative error higher than $0.1$ and smaller than $0.2$.
For a $100$-order, $10$-dimensional Z-polytope our implementation performs $2.95\cdot 10^3$ reflections in $180$ seconds while the \cg\ in~\cite{Cousins15} performs $2.50\cdot 10^5$ boundary oracle calls in $4900$ seconds for the same values of relative errors. Clearly, for those instances, our implementation achieves a better accuracy, while the run-time is at least two order of magnitudes smaller than that of \cg\ in~\cite{Cousins15}.

Exact volume computation of Z-polytopes~\cite{volZono} require an exponential to the dimension $d$ number of operations. 
Thus, it can not scale beyond $d = 10$ for low order Z-polytopes. 

\begin{figure}[t]
\begin{minipage}[h]{0.45\textwidth}
\includegraphics[width=\linewidth]{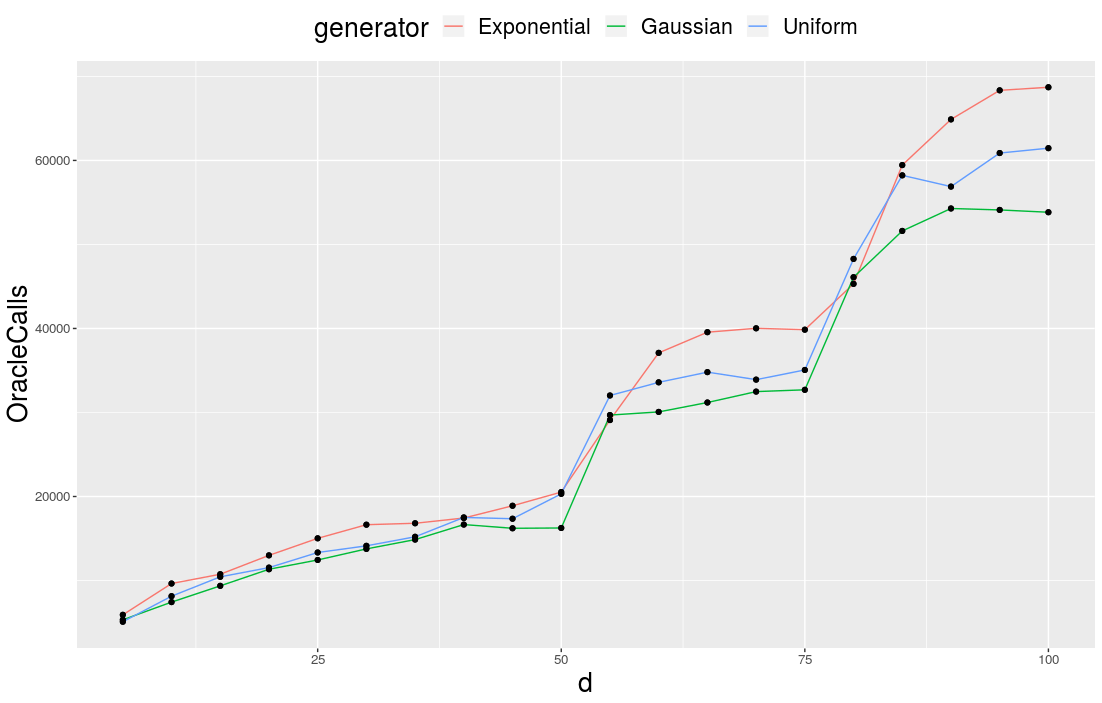}
\end{minipage}
\begin{minipage}[h]{0.45\textwidth}
\includegraphics[width=\linewidth]{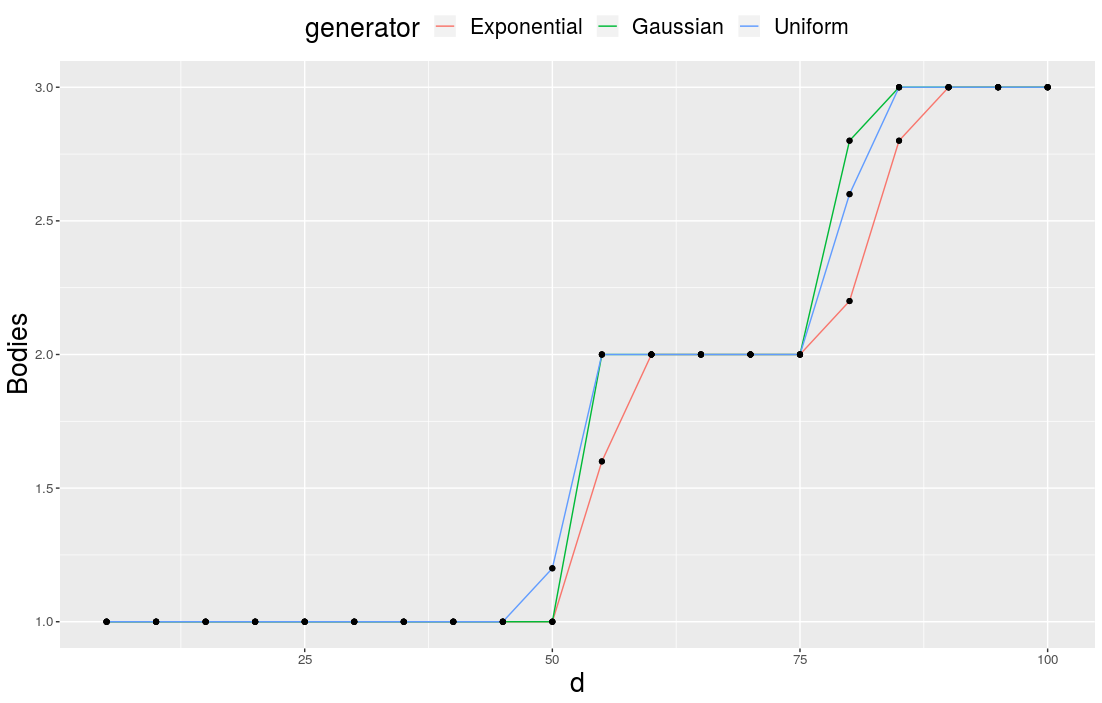}
\end{minipage}
\begin{minipage}[h]{0.45\textwidth}
\includegraphics[width=\linewidth]{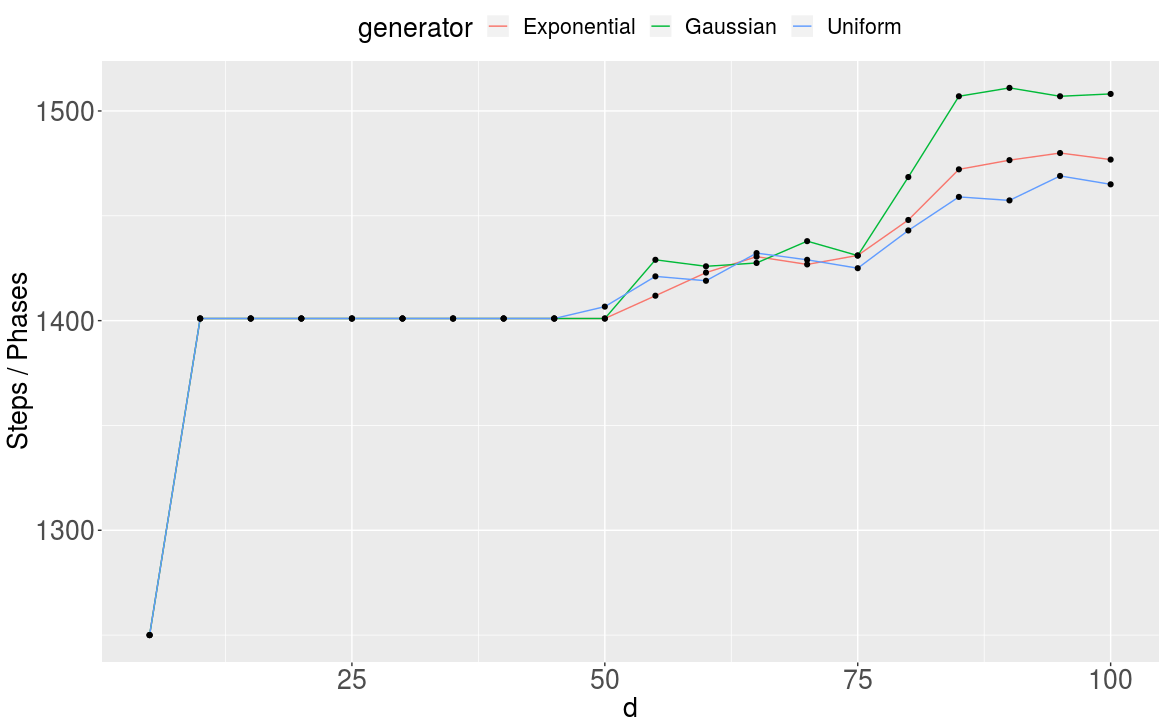}
\end{minipage}
\begin{minipage}[h]{0.45\textwidth}
\includegraphics[width=\linewidth]{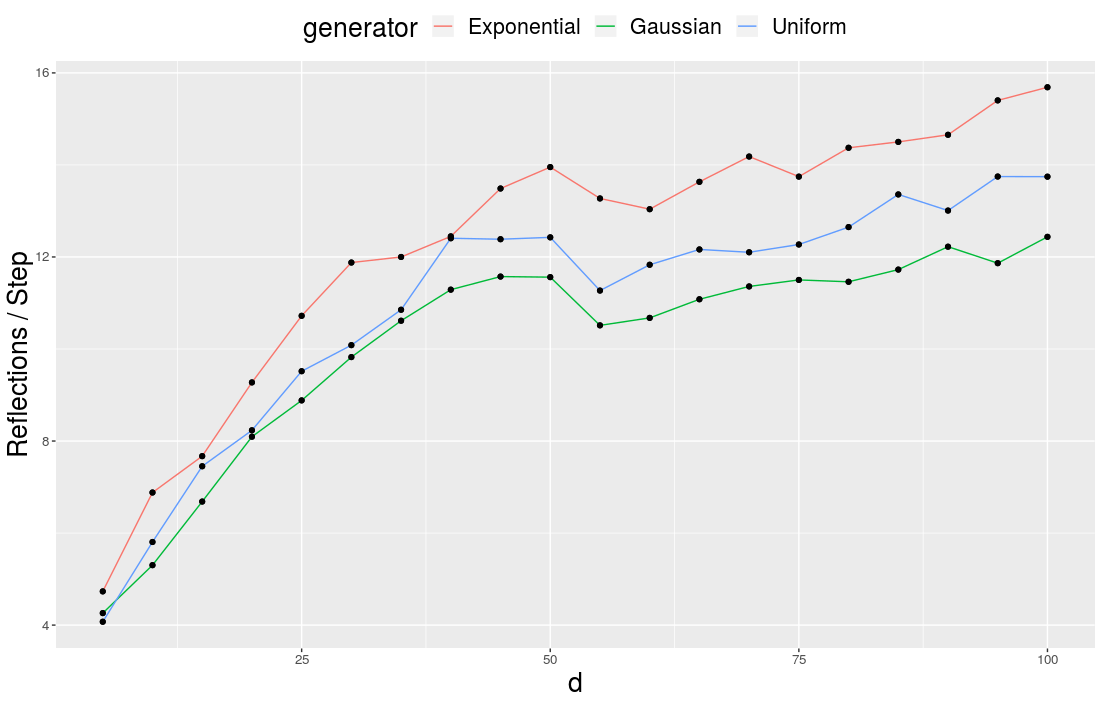}
\end{minipage}
\caption{Experimental complexity of \volalg\ for Z-polytopes of order 2. Total number of oracle calls (or reflections) is given by the ($\#$bodies in MMC)$\times$($\#$generated points per volume ratio)$\times$($\#$reflections per generated point). We set the error parameter $\epsilon = 0.1$ as input for the computations. \label{fig:zono_compl}}
\end{figure}

In the second column of Table~\ref{fig:zono} we report the type of body we use in MMC. 
For low order Z-polytopes we use the H-polytope from Section~\ref{sec:rounding} while for high order Z-polytopes we use the unit ball. 
Our experiments suggest to set the threshold for the order $n/d\leq 5$ so that the Z-polytope is considered as a low order one. 
In particular, Figure~\ref{fig:zono_bodies} shows that for order $\leq 4$, if we use the H-polytope in MMC, we get a bound on the the number of bodies in MMC $k\leq3$ for $d\leq 100$. 
Note that when we use the H-polytope the number of bodies $k$ is smaller for all pairs $(d,n)$ compared to using the unit ball without applying rounding to the Z-polytope. 
When we use balls in MMC, $n$ decreases for constant $d$ as $n$ increases. 
For order equals to $5$ the number of balls in MMC, without rounding, is equal or smaller than the number of H-polytopes in MMC when we use the H-polytope of Section~\ref{sec:rounding}. Table~\ref{fig:zono} shows that, for high-order Z-polytopes, $k=1$, which implies one or two acceptance-rejection steps. 
Moreover, Table~\ref{tab:r_nr_hp2} illustrates that the rounding method combined with balls in MMC results to a larger number of bodies in MMC and runtime than the case of using the H-polytope in MMC, while for order equal to $5$ the best runtime occurs when using a ball in MMC without rounding. 
Thus, we use the H-polytope in MMC if the order $n/d <5$. 
We conclude that for a Z-polytope of any order, no rounding preprocessing is needed, if we make the right choice of $C$ depending on the order; the maximum number of bodies is $k\leq3$ for $d\leq 100$.

To evaluate the efficiency of our implementation for Z-polytopes we wish to count the average number of reflections. 
We run our implementation of \volalg\ for $Z_{\mathcal{U}}$-$d$-$n$, $Z_{\mathcal{N}}$-$d$-$n$ and $Z_{Exp}$-$d$-$n$ Z-polytopes. An example for Z-polytopes of order $n/d=2$ is illustrated in Figure~\ref{fig:zono_compl}. First, as mentioned above our experiments imply that the number of bodies in MMC is $\leq 3$ for any order for $d\leq 100$. Moreover, the average number of reflections per point grows sub-linearly in $d$ (down-right plot in Figure~\ref{fig:zono_compl}). Considering that \billiard\ generates $O^*(1)$ points per volume ratio, the total number of reflections grows sub-linearly in $d$. 

\begin{figure}[t]
\begin{minipage}[h]{0.45\textwidth}
\includegraphics[width=\linewidth]{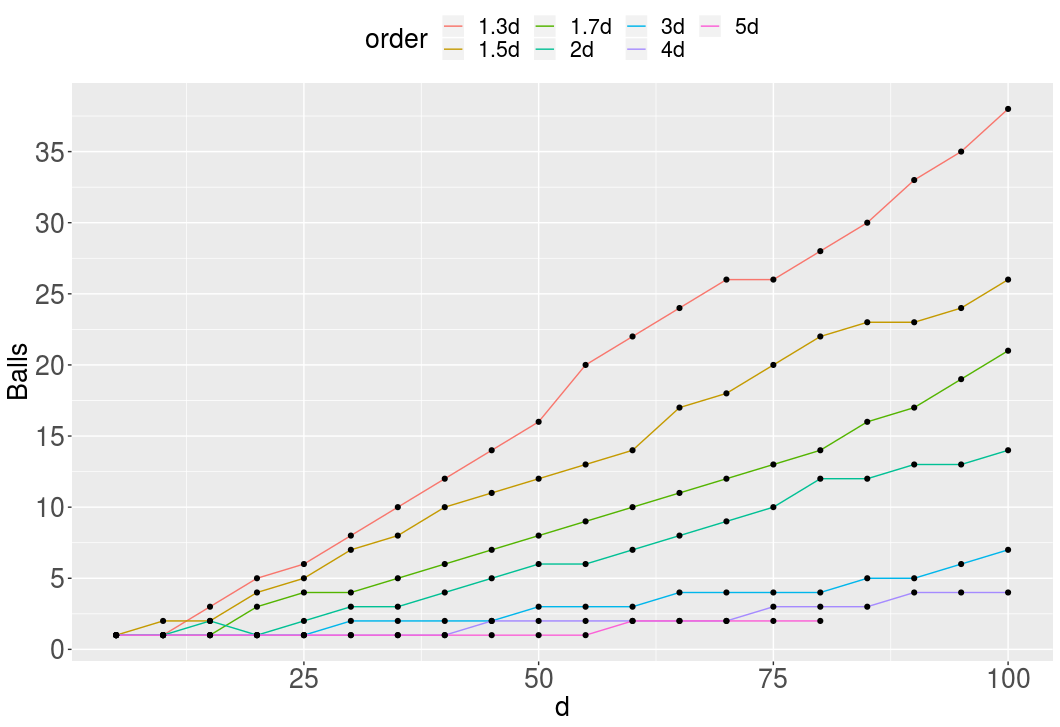}
\end{minipage}
\begin{minipage}[h]{0.45\textwidth}
\includegraphics[width=\linewidth]{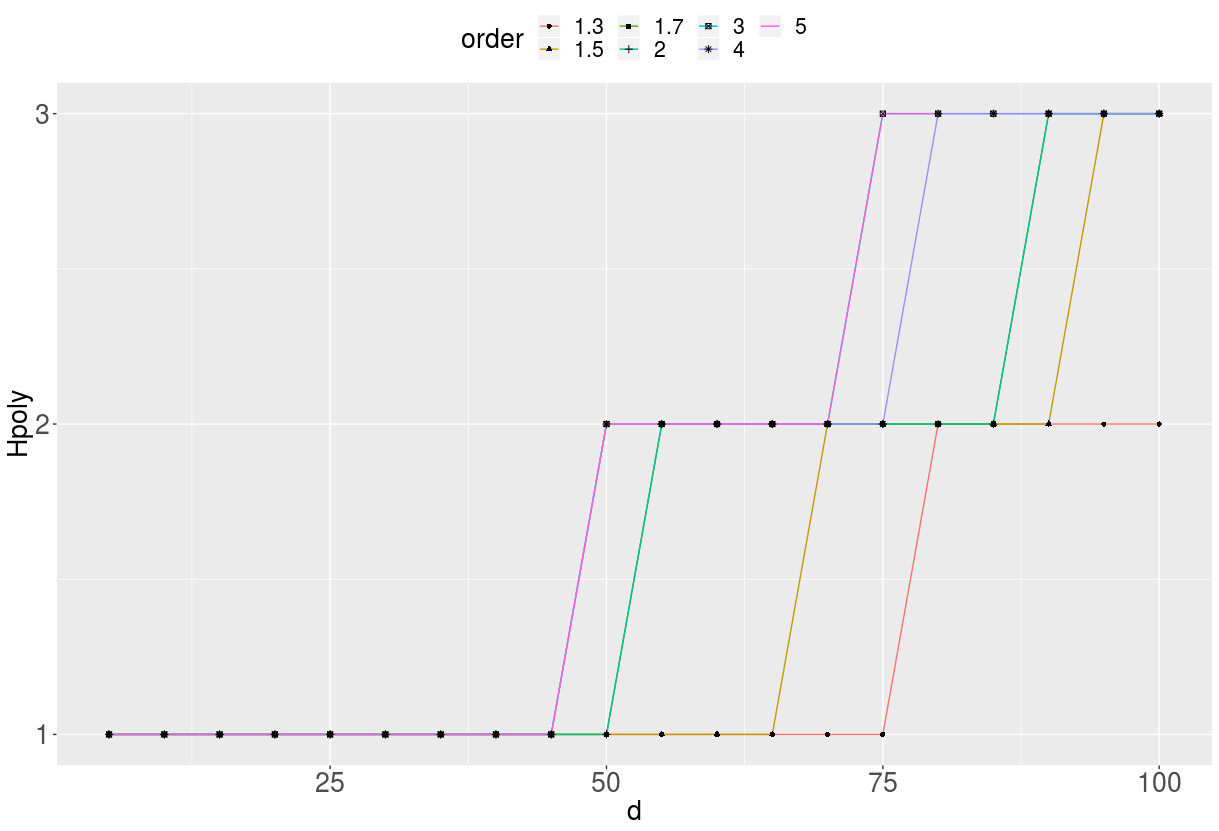}
\end{minipage}
\caption{Number of bodies in MMC for $Z_{\mathcal{U}}$. For each dimension we generate $10$ random Z-polytopes and we compute the number of bodies in MMC when $C$ is ball. We keep the Z-polytope with the larger number of bodies in MMC and then, for that one, we compute the number of bodies in MMC when $C$ is the Z-approximation.\label{fig:zono_bodies}}
\end{figure}

\subsubsection*{Application: Evaluate Z-polytope approximation} \label{zono_approx}

We propose an efficient algorithm for evaluating an over-approximation of a given Z-polytope $P$. Z-polytopes are critical in applications such as autonomous driving \cite{autdriv} or human-robot collaboration \cite{robcol}. Complexity strongly depends on the order of the encountered Z-polytopes. Thus, a practical solution is to over-approximate $P$, as tight as possible, with another Z-polytope $P_{red}\supseteq P$ of smaller order. 
A good measure of the approximation quality (fitness) is
\begin{equation}\label{pcafit}
R= (\vol(P_{red})/\vol(P))^{1/d} 
\end{equation}
Thus, to compute the approximation quality we need to compute volumes of Z-polytopes. In~\cite{Kopetzki17} they compute volumes exactly and deterministically, therefore it is impossible to compute the quality of approximation for $d>10$. 

Here, we exploit our software to test the quality of such approximations. The two methods that are able to scale for $d\geq 20$ are, primarily, the Principal Component Analysis (PCA) and the BOX method from~\cite{Kopetzki17}.
Both adopt similar approximations and are of comparable reliability; here we focus on PCA:
Let $X= [G | -G]^T$ and $\text{SVD}(X^T X)=U\Sigma V^T$, then
$G_{red}=U\cdot IH(U^T G)$ is square and generates $P_{red}$, where
$IH(\cdot)$ is the ``interval hull'' from \cite{IH}.  
Over-approximation can be seen as a reduction problem, so that the covariance among the $d$ generators of $P_{red}$ must be null.

Table~\ref{fig:zono} shows experimental results for Z-polytopes up to $d=100$ and various orders. $\vol(P_{red})$ is obtained exactly by computing one determinant. 
For PCA over-approximations we show that $R$ increases as $d$ grows but the same does not occur for fixed $d$ as order increases.
To our knowledge this is the first time practical volume estimation is used to test approximation methods in that high dimensional spaces. 

\begin{figure}[t]
\begin{minipage}[h]{0.45\textwidth}
\includegraphics[width=\linewidth]{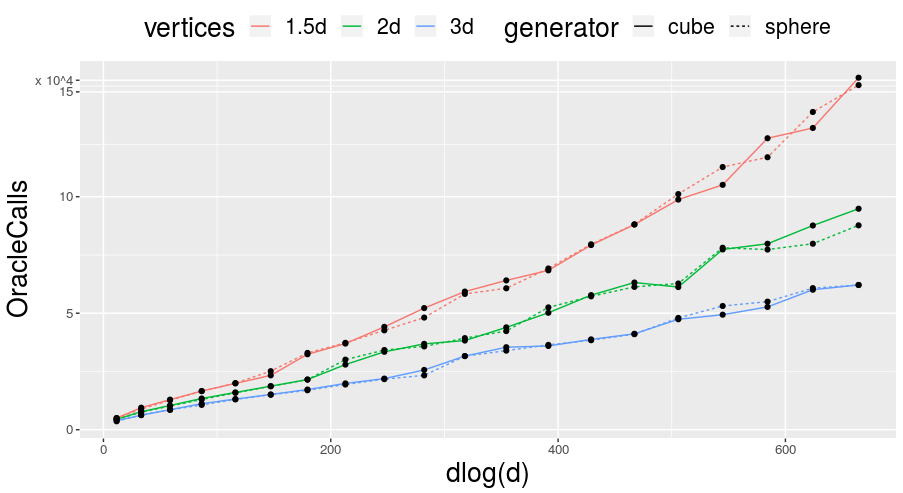}
\end{minipage}
\begin{minipage}[h]{0.45\textwidth}
\includegraphics[width=\linewidth]{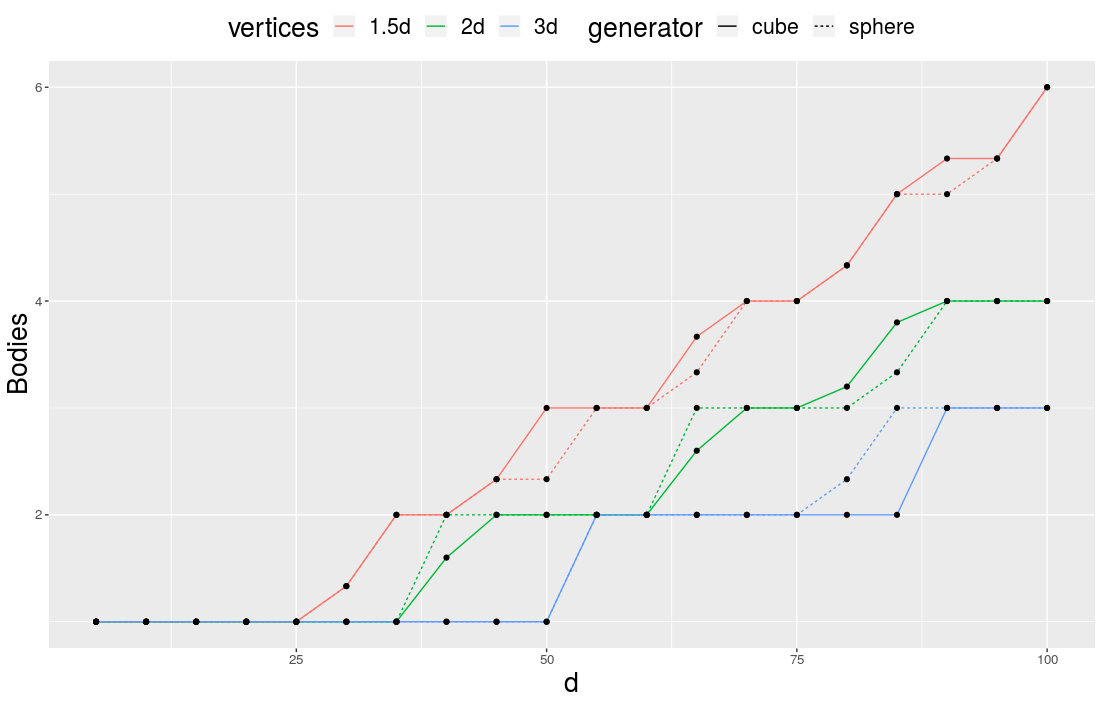}
\end{minipage}
\begin{minipage}[h]{0.45\textwidth}
\includegraphics[width=\linewidth]{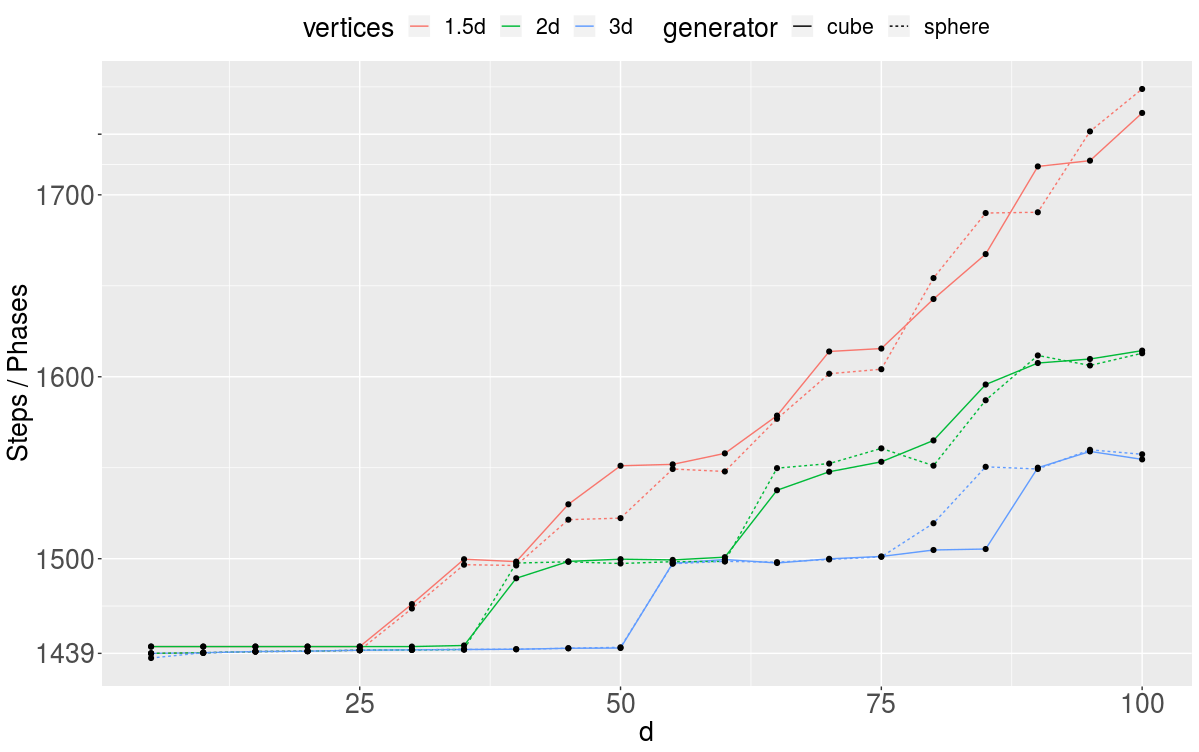}
\end{minipage}
\begin{minipage}[h]{0.45\textwidth}
\includegraphics[width=\linewidth]{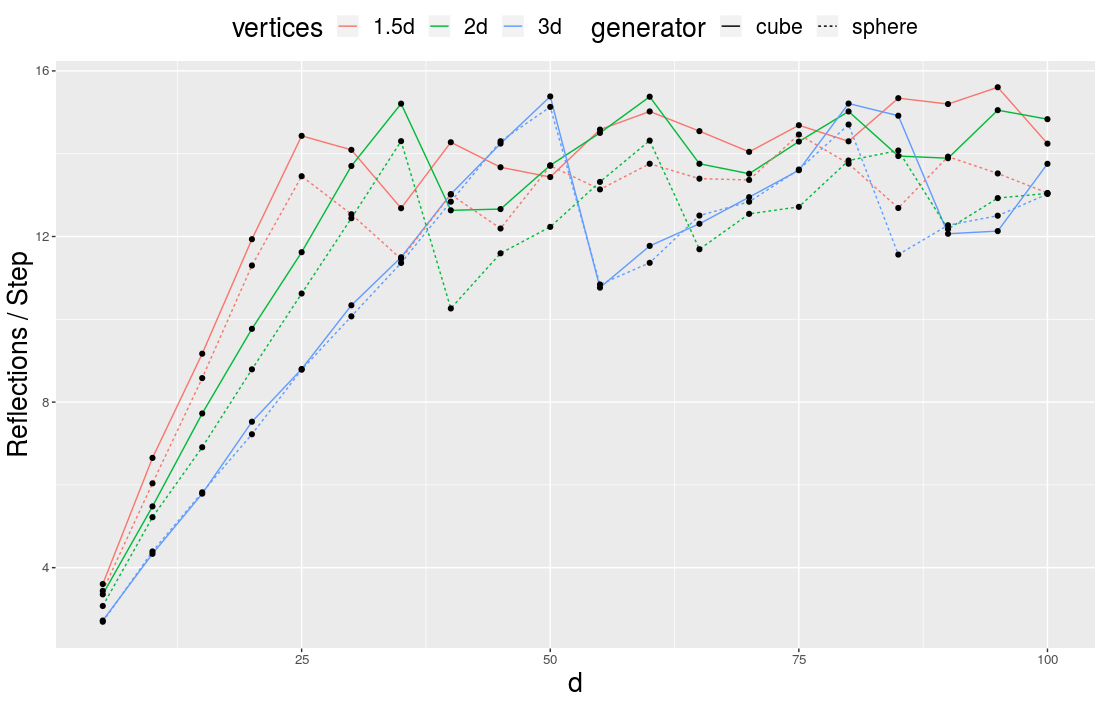}
\end{minipage}
\caption{Experimental complexity of \volalg\ for V-polytopes. Total number of oracle calls (or reflections) is given by the ($\#$bodies in MMC)$\times$($\#$generated points per volume ratio)$\times$($\#$reflections per generated point). We set the error parameter $\epsilon = 0.1$ as input for the computations. \label{fig:vpoly_compl}}
\end{figure}
\subsection{Volume experiments for V-polytopes}

In Table~\ref{fig:vpoly} we report the runtimes of our implementation for several V-polytopes. It estimates volume within a relative error at most $0.1$ in less than an hour. To round a V-polytope, we introduce a new method to control the sandwiching ratio $R/r$. 
Since the vertices are known, we put $P$ to John's position \cite{John48}.
First, we compute the smallest enclosing ellipsoid $E$ of $P$'s vertices by applying the approximation algorithm in~\cite{Todd07}. Notice that $E$ is also the smallest enclosing ellipsoid of $P$ and, if $E$ is a ball, then $P$ is in John's position. To round $P$, we apply to it the transformation that maps $E$ to the unit ball and repeat until the ratio between $E$'s longest and shortest axes falls under a given threshold.
Our experiments suggest this is much faster than the method in Section~\ref{sec:rounding}, as sampling is costly for this representation due to expensive boundary oracle calls. In the left plot of Figure~\ref{fig:vpoly_rounding}, notice that, for rounded V-polytopes, the number of bodies in MMC $k$ increases as a linear function of $d$. Moreover, as the number of vertices increases for constant $d$, then $k$ decreases. All rounding steps in our experiments for $d\leq 100$ took $\leq 3$ seconds.

To evaluate the performance of our algorithm we wish to count the average number of reflections (or boundary oracle calls). We run \volalg\ for rvc-$d$-$n$ and rvs-$d$-$n$ polytopes after a rounding step. An example of V-polytopes with number of vertices $n=1.5d,\ 2d,\ 3d$
is illustrated in Figure~\ref{fig:vpoly_compl}: The average number of reflections grows sublinearly in $d$ (down-right plot in Figure~\ref{fig:vpoly_compl}).
The total number of oracle calls grows as $O^*(d)$ (up-left plot in Figure~\ref{fig:vpoly_compl}). \volalg\ performs similarly for both rvc-$d$-$n$ and rvs-$d$-$n$ polytopes. 
However, the number of boundary oracle calls decreases as the number of vertices $n$ increases for constant $d$. 
This holds because after a rounding step the more the vertices of a random V-polytope the smaller the sandwiching ratio.

\section{Conclusions and future work}
We propose a new practical algorithm that computes for the first time volumes of very high-dimensional H-polytopes i.e.\ in the order of thousands of dimensions and high-dimensional V-, and Z- polytopes i.e.\ in the order of hundreds of dimensions. We provide strong empirical evidences on the efficiency of the algorithm and its accuracy through extended experiments on benchmark polytopes in computational geometry and bioinformatics.

We expect our software to address hard counting problems, e.g. counting the number of linear extensions of a partially ordered set~\cite{Talvitie18}. 
Our MMC scheme can be extended for other hard computational problems as Markov Chain Monte Carlo integration for multivariate integrals over a convex polytope. 
In particular, one could estimate the volume and generate the uniform samples required by one pass. 
Last but not least, we could exploit parallelism and faster software libraries for linear programs to scale to higher dimensions for V- and Z-polytopes.

\bibliographystyle{ACM-Reference-Format}
\bibliography{biblio} 

\newpage
\appendix

\section{Figures and tables}

\begin{table}[ht]
\centering
\begin{tabular}{|l||c|c|c|c|c|r|}\hline
 H-polytope & volume & std/mean & error & $k$ & points & time  \\ \hline\hline
cross-$10$ & 2.81e-04 & 0.06 & 0.01 & 1.0 & 1.50e+03 & 0.11 \\
\hline
cross-$13$ & 1.30e-06 & 0.07 & 0.01 & 1.0 & 1.50e+03 & 1.17 \\
\hline
cross-$15$ & 2.49e-08 & 0.04 & 0.01 & 1.0 & 1.50e+03 & 12.8 \\
\hline
cube-$100$ & 1.18e+30 & 0.15 & 0.07 & 9.2 & 2.80e+04 & 1.5 \\
\hline
cube-$250$ & 1.96e+75 & 0.16 & 0.08 & 34.5 & 1.96e+05 & 36.2 \\
\hline
cube-$500$ & 3.38e+150 & 0.25 & 0.03 & 84.7 & 7.09e+05  & 425 \\
\hline
cube-$750$ & 5.75e+225 & 0.22 & 0.03 & 139.9 & 1.52e+06 & 1501 \\
\hline
cube-$1000$ & 2.11e+301 & 0.29 & 0.03 & 201.0 & 2.70e+06 & 6180 \\
\hline
$\Delta$-$100$ & 1.02e-158 & 0.16 & 0.05 & 21.0 & 1.01e+05  & 4.8 \\
\hline
$\Delta$-$250$ & 3.20e-493 & 0.15 & 0.03 & 66.3 &  5.05e+05 & 66.4 \\
\hline
$\Delta$-$500$ & 7.55e-1135 & 0.20 & 0.08 & 149.1 & 1.65e+06  & 847.8 \\
\hline
$\Delta$-$750$ & 3.70e-1833 & 0.17 & 0.04 & 237.9 & 3.28e+06  & $4088$ \\
\hline
$\Delta$-$1000$ & 2.67e-2568 & 0.24 & 0.07 & 328.4 &  5.34e+06 & $10711$ \\
\hline
$\Delta$-$50$-$50$ & 1.07e-129 & 0.21 & 0.01 & 22.3 & 9.90e+04 & 3.3 \\
\hline
$\Delta$-$125$-$125$ & 2.91e-419 & 0.18 & 0.03 & 67.0 & 5.07e+05 & 107.2  \\
\hline
$\Delta$-$250$-$250$ & 9.10e-986 & 0.23 & 0.05 & 151.3 & 1.69e+06 & 958.5 \\
\hline
$\Delta$-$375$-$375$ & 6.26e-1609 & 0.21 & 0.06 & 239.3 & 3.35e+06 & $4344$ \\
\hline
$\Delta$-$500$-$500$ & 6.22e-2269 & 0.24 & 0.07 & 332.5 & 5.39e+06 & $9432$ \\
\hline
$rhs$-$100$-$1000$ & 4.19e-46 & 0.11 & ?? & 10.6 & 3.33e+04 & 5.25 \\
\hline
$rhs$-$300$-$2000$ & 2.75e-205 & 0.20 & ?? & 50.0 & 3.22e+05 & 253 \\
\hline
$rhs$-$1000$-$6000$ & 4.18e+275 & 0.23 & ?? & 202.1 & 2.61e+06 & $19431$ \\
\hline
$rhs$-$1000$-$10000$ & 1.47e+216 & 0.21 & ?? & 172.3 & 1.90e+06 & $22833$ \\
\hline
\hline
e\_coli\_core [174-24] & 3.37e-47 & 0.08 & ?? & 2.0 & 3.52e+03 & 0.16 \\
 \hline
iAB\_RBC\_283 [906-130]  & 1.73e+43 &  0.28 & ?? & 23.9 & 1.11e+05 & 31.3 \\
\hline
iJR904 [1132-227] & 6.00e+53 & 0.25 & ?? & 50.8 & 3.38e+05 & 169.3 \\
\hline
iAT\_PLT\_636 [2016-289] & 3.65e-854 & 0.25 & ?? & 66.3 & 4.99e+05 & 512.7 \\
\hline
iSDY\_1059 [2966-509] & 2.25e-350 & 0.27 & ?? & 136.0 & 1.43e+06 & 3486 \\
\hline
Recon1 [4934-931] & 8.01e-5321 & 0.17 & ?? & 266.1 & 3.84e+06 & $18225$ \\
\hline
\end{tabular}
\caption{\label{tab:h_poly_vol} Volume estimation for H-polytopes by \volalg. 
For each polytope \volalg\ performs $10$ runs; 
volume: mean volume computed;
std/mean: ratio of std and mean of volumes;
error: the relative error; 
$k$: average number of generated bodies in MMC; 
points: average number of generated points; 
time: average time in seconds; 
$\epsilon = 0.1$ used in all cases;
??: the exact volume is unknown.
The last $6$ are metabolic polytopes with $m$ facets in $d$ dimensions denoted as $[m-d]$. 
For metabolic polytopes we follow the preprocessing of {\tt cobra} package~\cite{cobra20} also used in~\cite{chalkis2020geometric,CousinsChnr}.}
\end{table}

\begin{table}[ht]
\centering
\begin{tabular}{|l||c|c|c|c|c|c|c|c|r|}\hline
 $\mathcal{B}_n$ & $d$ & exact & asymptotic & volume & error & std/mean & $k$ & points & time  \\ \hline\hline
$\mathcal{B}_3$ & 4 & 1.12 & 1.410849 & 1.12e+00 & 0.00 &  0.02 & 1 & 1.50e+03 & 0.01 \\
\hline
$\mathcal{B}_4$ & 9 & 6.21e-02 & 7.61e-02 & 6.17e-02 & 0.01 & 0.02 & 1 & 1.50e+03 & 0.01 \\
\hline
$\mathcal{B}_5$ & 16 &  1.41e-04 & 1.69e-04 & 1.38e-04 & 0.02 & 0.01 & 1 & 1.52e+03  &  0.03 \\
\hline
$\mathcal{B}_6$ &  25 & 7.35e-09 & 8.62e-09 & 7.35e-09 & 0.00 & 0.05 & 2.3 & 3.63e+03 & 0.06 \\
\hline
$\mathcal{B}_7$ & 36 &  5.64e-15 & 6.51e-15 & 5.49e-15 & 0.03 & 0.02 & 4.0 & 7.98e+03 & 0.16 \\
\hline
$\mathcal{B}_8$ & 49 & 4.42e-23 & 5.03e-23 & 4.24e-23 & 0.04 & 0.05 & 6.4 & 1.62e+04 & 0.37 \\
\hline
$\mathcal{B}_9$ & 64 &  2.60e-33 & 2.93e-33 & 2.54e-33 & 0.02 & 0.03 & 9.2 &  2.82e+04  & 0.86 \\
\hline
$\mathcal{B}_{10}$ & 81 &  8.78e-46 & 9.81e-46 & 8.46e-46 & 0.04 & 0.05 & 13.1 &  4.69e+04 & 1.9 \\
\hline
$\mathcal{B}_{11}$ & 100 & ?? &  1.49e-60 & 1.24e-60 & ?? & 0.05 & 17.6 & 7.17e+04 & 3.5 \\
\hline
$\mathcal{B}_{12}$ & 121 & ?? &   8.38e-78 & 7.28e-78 & ?? & 0.10 & 23.1 &  1.07e+05 & 6.7 \\
\hline
$\mathcal{B}_{13}$ & 144 & ?? &   1.43e-97 & 1.205e-97 & ?? & 0.19 & 29.2 & 1.49e+05 & 11.9 \\
\hline
$\mathcal{B}_{14}$ & 169 & ?? &   6.24e-120 & 5.13e-120 & ?? & 0.08 & 35.9 & 2.04e+05  & 20.7 \\
\hline
$\mathcal{B}_{15}$ & 196 & ?? &   5.94e-145 & 4.92e-145 & ?? & 0.05 & 43.4 & 2.68e+05 & 34.5 \\
\hline
$\mathcal{B}_{16}$ & 225 & ?? &   1.06e-172 & 9.62e-173 & ?? & 0.12 & 51.9 & 3.49e+05 &  56.3 \\
\hline
$\mathcal{B}_{17}$ & 256 & ?? &   3.10e-203 & 2.50e-203 & ?? & 0.10 & 60.9 & 4.41e+05 & 84.2  \\
\hline
$\mathcal{B}_{18}$ & 289 & ?? &   1.30e-236 & 1.03e-237 & ?? & 0.10 & 72.0 & 5.65e+05 &  129 \\
\hline
$\mathcal{B}_{19}$ & 324 & ?? &   6.96e-273 & 5.41e-273 & ?? & 0.09 & 82.0 & 6.88e+05 & 198  \\
\hline
$\mathcal{B}_{20}$ & 361 & ?? &   4.23e-312 & 3.31e-312 & ?? & 0.11 & 94.9 & 8.46e+05 &  297 \\
\hline\hline
$\mathcal{B}_{21}$ & 400 & ?? & 2.62e-354 & 1.77e-354 & ?? & 0.27 & 101.6 & 9.40e+05 & 325  \\
\hline
$\mathcal{B}_{22}$ & 441 & ?? & 1.49e-399 & 1.49e-399 & ?? & 0.17 & 115.0 & 1.12e+06 &  453 \\
\hline
$\mathcal{B}_{23}$ & 484 & ?? & 7.10e-448 & 4.33e-448 & ?? & 0.30 & 128.8 & 1.33e+06 &  624 \\
\hline
$\mathcal{B}_{24}$ & 529 & ?? & 2.57e-499 & 2.10e-499 & ?? & 0.27 & 143.7 & 1.55e+06 &  845 \\
\hline
$\mathcal{B}_{25}$ & 576 & ?? & 6.46e-554 & 5.56e-554 & ?? & 0.28 & 158.8 & 1.82e+06 & 1060  \\
\hline
$\mathcal{B}_{26}$ & 625 & ?? & 1.04e-611 & 7.80e-612 & ?? & 0.22 & 176.3 & 2.11e+06 & 1405 \\
\hline
$\mathcal{B}_{27}$ & 676 & ?? & 9.81e-673 & 1.00e-672 & ?? & 0.30 & 194.0 & 2.41e+06 & 1918  \\
\hline
$\mathcal{B}_{28}$ & 729 & ?? & 5.05e-737 & 6.13e-737 & ?? & 0.31 & 211.1 & 2.75e+06 & 2573  \\
\hline
$\mathcal{B}_{29}$ & 784 & ?? & 1.31e-804 & 1.17e-804 & ?? & 0.31 &  231.5  & 3.15e+06 &  3539 \\
\hline
$\mathcal{B}_{30}$ & 841 & ?? & 1.60e-875 & 1.35e-875 & ?? & 0.35 & 249.1 & 3.53e+06 & 4982  \\
\hline
$\mathcal{B}_{31}$ & 900 & ?? & 8.55e-950 & 8.43e-950 & ?? & 0.40 & 272.9 & 4.02e+06 & 6967  \\
\hline
$\mathcal{B}_{32}$ & 961 & ?? & 1.86e-1027 & 1.58e-1027 & ?? & 0.39 & 293.8 & 4.47e+06 & 8905  \\
\hline
$\mathcal{B}_{33}$ & 1024 & ?? & 1.56e-1108 & 1.37e-1108 & ?? & 0.42 & 314.1 & 4.97e+06 &  11105  \\
\hline
\end{tabular}
\caption{\label{tab:birkhoff_vol} Average volume of Birkhoff polytopes ($\mathcal{B}_n$) over $50$ runs of \volalg\ for $n\leq 20$ and over $10$ runs for $21\leq n\leq 28$; 
$d$: the dimension of $\mathcal{B}_n$; 
exact: the exact volume; 
asymptotic: the asymptotic volume from~\cite{canfield07}; 
volume: mean estimated volume; 
error: relative error; 
std/mean: the ratio between std and mean of volumes; 
$k$: average number of bodies in MMC;
points: average number of generated points; 
time: average time in seconds; 
??: the exact volume is unknown;
$e=0.1$ in all cases.}
\end{table}

\begin{table}[ht]
\centering
\begin{tabular}{|l||c|c|c|c|c|c|r|c|c|}\hline
 Z-polytope &  Body & order &  volume & std/mean & $k$ & Refl  & time & $\vol(P_{red})$ & $R$  \\ \hline\hline
\centering $Z_{\mathcal{U}}$-$10$-$200$ & \centering Ball & \centering $20$ & 5.47e+34 & 0.06 & 1 & 3.15e+03 & \centering 13 & 2.24e+37 & 1.82 \\
\hline
\centering $Z_{\mathcal{U}}$-$10$-$500$ & \centering Ball & \centering $50$ & 3.19e+38 & 0.07 & 1 & 3.00e+03 & \centering 50 & 1.42e+41 & 1.84 \\
\hline
\centering $Z_{\mathcal{U}}$-$10$-$1000$ & \centering Ball & \centering $100$ & 2.57e+41 & 0.04 & 1 & 2.95e+03 & \centering 180 & 1.08e+44 & 1.83 \\
\hline
\centering $Z_{\mathcal{U}}$-$10$-$1500$ & \centering Ball & \centering $150$ & 1.85e+43 & 0.05 & 1 & 2.88e+03 & \centering 320 & 7.30e+45 & 1.82 \\
\hline
\centering $Z_{\mathcal{U}}$-$15$-$300$ & \centering Ball & \centering $20$ & 3.79e+51 & 0.07 & 1 & 4.34e+03 & \centering 50 & 3.63e+56 & 2.15 \\
\hline
\centering $Z_{\mathcal{U}}$-$15$-$750$ & \centering Ball & \centering $50$ & 8.14e+57 & 0.05 & 1 & 3.39e+03 & \centering 183 & 7.84e+62 & 2.15 \\
\hline
\centering $Z_{\mathcal{U}}$-$15$-$1500$ & \centering Ball & \centering $100$ & 3.45e+62 & 0.07 & 1 & 3.22e+03 & \centering 503 & 3.20e+62 & 2.14 \\
\hline
\centering $Z_{\mathcal{U}}$-$15$-$2250$ & \centering Ball & \centering $150$ & 1.47e+65 & 0.05 & 1 & 3.14e+03 & \centering 1071 & 1.31e+70 & 2.14 \\
\hline
\centering $Z_{\mathcal{U}}$-$20$-$300$ & \centering Ball & \centering $15$ & 1.31e+67 & 0.09 & 1 & 4.53e+03 & \centering 72 & 7.91e+74 & 2.45 \\
\hline
\centering $Z_{\mathcal{U}}$-$20$-$400$ & \centering Ball & \centering $20$ & 5.58e+69 & 0.05 & 1 & 4.46e+03 & \centering 110 & 2.90e+77 & 2.43 \\
\hline
\centering $Z_{\mathcal{U}}$-$20$-$1000$ & \centering Ball & \centering $50$ & 1.12e+77 & 0.05 & 1 & 3.76e+03 & \centering 455 & 4.91e+84 & 2.41 \\
\hline
\centering $Z_{\mathcal{U}}$-$20$-$2000$ & \centering Ball & \centering $100$ & 1.93e+83 & 0.06 & 1 & 3.51e+03 & \centering 1736 & 7.56e+90 & 2.40 \\
\hline
\centering $Z_{\mathcal{U}}$-$30$-$450$ & \centering Ball & \centering $15$ &  8.15e+99 & 0.10 & 1 & 5.45e+03 & \centering 280 & 5.73e+113 & 2.89 \\
\hline
\centering $Z_{\mathcal{U}}$-$30$-$600$ & \centering Ball & \centering $20$ & 1.49e+104 & 0.06 & 1 & 5.46e+03 & \centering 503 & 1.04e+118 & 2.89 \\
\hline
$Z_{Exp}$-$60$-$78$ & \centering Hpoly & 1.3 & 1.47e+16 & 0.07 & 1 & \centering 7.26e+04 & 357 & 9.25e+54 & 4.43 \\
\hline
$Z_{Exp}$-$80$-$104$ & \centering Hpoly & 1.3 & 6.39e+08 & 0.08 & 3.0 & \centering 1.37e+05 & 1604 & 5.03e+71 & 6.11 \\
\hline
$Z_{Exp}$-$100$-$130$ & \centering Hpoly & 1.3 & 4.73e+17 & 0.05 & 3.0 & \centering 1.97e+05  & 2851 & 1.01e+101 & 6.81 \\
\hline
$Z_{Exp}$-$60$-$90$ & \centering Hpoly & 1.5 & 9.24e+19 & 0.07 & 1 & \centering 5.68e+04  & 332 & 1.04e+58 & 4.31 \\
\hline
$Z_{Exp}$-$80$-$120$ & \centering Hpoly  & 1.5 & 3.46e+21 & 0.05 & 2.7 & \centering 7.28e+04 & 901 & 9.70e+75 & 4.79 \\
\hline
$Z_{Exp}$-$100$-$150$ & \centering Hpoly  & 1.5 & 5.30e+37 & 0.09 & 3 & \centering 1.27e+05 & 2647 & 2.66e+108 & 5.09 \\
\hline
$Z_{\mathcal{N}}$-$60$-$102$ & \centering Hpoly & 1.7 & 2.06e+132 & 0.07 & 2 & \centering 4.18e+04 & 329 & 2.65e+175  & 5.23  \\
\hline
$Z_{\mathcal{N}}$-$80$-$136$ & \centering Hpoly  & 1.7 & 1.74e+176 & 0.10 & 2.8 & \centering 5.17e+04 & 857 & 2.63e+237 & 5.81 \\
\hline
$Z_{\mathcal{N}}$-$100$-$170$ & \centering Hpoly  & 1.7 & 3.74e+222 & 0.14 & 3 & \centering 8.06e+04 & 2402 & 1.98e+303 & 6.42 \\
\hline
$Z_{\mathcal{U}}$-$60$-$120$ & \centering Hpoly  & 2 & 1.85e+135 & 0.12 & 2 & \centering 3.38e+04 & 337 & 1.89e+177 & 5.01 \\
\hline
$Z_{\mathcal{U}}$-$80$-$160$ & \centering Hpoly  & 2 & 2.27e+186 & 0.15 & 2.6 & \centering 5.72+04 & 1227 & 2.56e+245 & 5.47  \\
\hline
$Z_{\mathcal{U}}$-$100$-$200$ & \centering Hpoly  & 2 & 5.86e+231 & 0.17 & 3 & \centering 6.37e+04 & 2652 & 1.85e+310 & 6.10  \\
\hline
$Z_{\mathcal{U}}$-$60$-$180$ & \centering Hpoly  & 3 & 2.95e+155 & 0.21 & 2 & 1.98e+04 & 417 &  6.61e+191 & 4.03\\
\hline
$Z_{\mathcal{U}}$-$80$-$240$ & \centering Hpoly  & 3 & 5.59e+206 & 0.19 & 2.8 & 2.97e+04 & 2497 & 9.47e+260 & 4.76  \\
\hline
\end{tabular}
\caption{\label{fig:zono} Volume estimation for Z-polytopes by \volalg. 
For each polytope \volalg\ performs 10 runs. 
Body: type of body in MMC; 
order: the ratio between the number of generators over the dimension ($n/d$); volume: the mean computed volume; 
error: relative error;
$k$: average number of bodies in MMC;
Refl: average number of reflections (or boundary oracle calls) performed by \billiard; 
time: average time in seconds; 
$\vol(P_{red})$: volume of the over-approximation;
$R$: ratio of fitness by Equation~(\ref{pcafit}); 
$\epsilon = 0.1$ in all cases.}
\end{table}

\begin{table}[ht]
\centering
\begin{tabular}{|c||c|c|c|c|c|c|c|}\hline
 V-polytope & volume & std/mean & $k$ & Refl & error & time & exact time   \\ \hline\hline
\centering cross-$50$ & \centering 4.546e-13  & 0.05 & \centering 1.0 & 10.9e+03 & 0.05 & 3 &  $--$ \\
\hline
\centering cross-$100$ & \centering 1.43e-128 & 0.05 & \centering 2.0 & 5.80e+04 & 0.08 & 112 & $--$ \\
\hline
\centering cross-$150$ & \centering 1.50e-64 & 0.07 & \centering 2.0 & 2.99e+04 & 0.08 & 24 & $--$ \\
\hline
\centering cross-$200$ & \centering 1.65e-95 & 0.10 & \centering 3.0 & 3.92e+04 & 0.02 & 53 & $--$ \\
\hline
\centering $\Delta$-$40$ & \centering 1.17e-48 & 0.15 & \centering 6.1 & 2.42e+05 & 0.05 & 230 &  0.008 \\
\hline 
\centering $\Delta$-$60$ & \centering 1.12e-82 & 0.14 & \centering 11.2 & 5.13e+05 & 0.07 & 989 &  0.02 \\
\hline  
\centering $\Delta$-$80$ & \centering 1.35e-119 & 0.21 & \centering 15.6 & 1.21e+06 & 0.03 & 4140 & 0.07 \\
\hline    
\centering cube-$10$ & \centering 1052.4 & 0.07 & \centering 1 & 1.50e+03 &  0.03 & 54 &  $--$  \\
\hline
\centering cube-$11$ & \centering 1930.2 & 0.06 & \centering 1 & 1.5e+03 & 0.06 & 155  & $--$ \\
\hline  
\centering cube-$12$ & \centering 4240.6 & 0.08 & \centering 1 & 1.81e+03 & 0.04 & 567    &  $--$ \\
\hline        
\centering cube-$13$ & \centering 7538.2 & 0.06 & \centering 1 & 1.87e+03 & 0.08 & 2937   & $--$  \\
\hline
$K$-$60$ & \centering 2.43e-62 & 0.10 & \centering 1.0 & 4.08e+04 & ?? &  178 & $--$   \\
\hline
$K$-$70$ & \centering 2.05e-77 & 0.12 & \centering 1.0  & 4.78e+04 & ?? & 298  & $--$   \\
\hline
$K$-$80$ & \centering 3.62e-93 & 0.14 & \centering 1.0  & 5.43e+04 & ?? &  466 & $--$   \\
\hline
$K$-$90$ & \centering  1.72e-109 & 0.08 & \centering 1.9  & 7.32e+04 & ?? &  756  & $--$  \\
\hline
$K$-$100$ & \centering 3.16e-126 & 0.12 & \centering 2.0 & 7.79e+04 & ?? & 1066  & $--$   \\
\hline
$E$-$3$-$4$ & \centering 1.35e-01 & 0.07 & \centering 1.0 & 1.22e+04 & 0.02 &   35.4   & $--$  \\
\hline
$E$-$3$-$5$ & \centering  8.98e-03  & 0.05 & \centering 1.0  & 1.49e+04 & 0.00 & 358   & $--$  \\
\hline
$E$-$5$-$3$ & \centering 1.04e-01 & 0.08 & \centering 1.0 & 1.63e+04 & 0.00 & 403  &  $--$  \\
\hline
$E$-$4$-$4$ & \centering 1.43e-02 & 0.07 & \centering 1.0 & 1.74e+04 & 0.02 & 321 &  $--$  \\
\hline
$ccp$-$5$ & \centering 2.31 & 0.07 & \centering 1.0 & 9.24e+03 & 0.00 & 3.1 & 0.02  \\
\hline
$ccp$-$6$ & \centering 1.40 & 0.06 & \centering 1.0 & 1.28e+04 & 0.04 & 8.8 &  44.6  \\
\hline
$ccp$-$7$ & \centering 5.21e-01 & 0.06 & \centering 1.0 &  1.67e+04 & ?? & 29.5  & $--$  \\
\hline
$cc$-$8$-$5$ & \centering 1.77e-03 & 0.04 & \centering 1.0 & 8.79e+03  & 0.02 & 6.9 &  0.03  \\
\hline
$cc$-$8$-$7$ & \centering 2.75e+01 & 0.06 & \centering 1.0 & 7.60e+03  & 0.00 & 7.2 & 0.07  \\
\hline
$cc$-$8$-$9$ & \centering 1.36e+04 & 0.05 & \centering 1.0 & 7.38e+03  & 0.02 & 9.3 & 0.5 \\
\hline
$cc$-$8$-$11$ & \centering 1.40e+06 & 0.07 & \centering 1.0 & 7.22e+03 & 0.01 & 14.1 & 2.4 \\
\hline
\centering rvs-$10$-$20$ & \centering 1.51e-05 & 0.07 & \centering 1 & 1.11e+04 & 0.06 & 3.2 & 0.01 \\
\hline 
\centering rvc-$10$-$30$ & \centering 5.30e-02 & 0.05 & \centering 1 & 9.65e+03 & 0.05 & 4.5 & 0.09 \\
\hline 
\centering rvs-$10$-$40$ & \centering 3.85e-04 & 0.09 & \centering 1 & 7.19e+03 & 0.05 & 4.8 & 0.4 \\
\hline
\centering rvc-$10$-$50$ & \centering 0.440 & 0.05 & \centering 1 & 7.27e+03 & 0.02 & 3.7 & 0.9 \\
\hline
\centering rvs-$15$-$45$ & \centering 1.01e-03 & 0.07 & \centering 1 & 9.34e+03 & 0.04 & 6.2 & 87 \\
\hline 
\centering rvc-$15$-$60$ & \centering 8.72e-04 & 0.05 & \centering 1 & 8.98e+03 & 0.02 & 9.1 & 549 \\
\hline
\centering rvs-$20$-$40$ & \centering 2.86e-10 & 0.06 & \centering 1 & 1.23e+04 & ?? & 7.7 & $--$ \\
\hline
\centering rvc-$50$-$200$ & \centering 1.04e-18 & 0.08 & \centering 1 & 2.25e+03 & ?? & 215 & $--$ \\
\hline
\centering rvc-$60$-$90$ & \centering 1.48e-39 & 0.10 & \centering 3.1 & 5.18e+04 & ?? & 267 & $--$ \\
\hline 
\centering rvc-$80$-$120$ & \centering 5.94e-57 & 0.08 & \centering 4.5 & 7.88e+04 & ?? & 763 & $--$ \\
\hline 
\centering rvc-$100$-$150$ & \centering 1.31e-75 & 0.07 & \centering 5.8 & 1.00e+05 & ?? & 2237 & $--$ \\
\hline 
\centering rvs-$60$-$120$ & \centering 2.04e-72 & 0.11 & \centering 2.2 & 5.39e+04 & ?? & 422 & $--$ \\
\hline
\centering rvs-$80$-$160$ & \centering 5.17e-106 & 0.19 & \centering 3.0 & 7.50e+04 & ?? & 1069 & $--$ \\
\hline
\centering rvs-$100$-$200$ & \centering 9.13e-142 & 0.13 & \centering 4.1 & 1.05e+05 & ?? & 2520 & $--$ \\
\hline
\centering rvc-$60$-$120$ & \centering 2.72e-33 & 0.14 & \centering 2.1 & 5.25e+04 & ?? & 354 & $--$ \\
\hline
\centering rvc-$80$-$160$ & \centering 5.52e-49 & 0.15 & \centering 3.0 & 7.93e+04 & ?? & 1052 & $--$ \\
\hline
\centering rvc-$100$-$200$ & \centering 1.04e-65 & 0.16 & \centering 4.0 & 1.04e+05 & ?? & 2278 & $--$ \\
\hline
\end{tabular}
\caption{\label{fig:vpoly} Volume estimation for V-polytopes by \volalg. 
For each polytope \volalg\ performs 10 runs; 
volume: average estimated volume; 
$k$: average number of bodies in MMC; 
Refl: average number of reflection (or boundary oracle calls) performed by \billiard; 
error: average relative error; 
time: average time in seconds; 
exact time: time in seconds for exact volume computation by {\tt qhull} (using R package {\tt geometry}); 
$--$ implies the execution failed due to memory issues or exceeded $1$hr; 
??: the exact volume is unknown;
$\epsilon = 0.1$ in all cases.}
\end{table}

\begin{table}[ht]
\begin{tabular}{|c||c|c|c||c|c|c||c|c|c|}\hline
& \multicolumn{3}{c||}{no rounding ($C$ is a ball)} &
\multicolumn{3}{c||}{rounding} &
\multicolumn{3}{c|}{H-polytope approx.}\\ \hline
 \centering Z-polytope & $k$ & Refl & time &  $k$ &  Refl &   time &  $k$ &  Refl &  time \\ \hline\hline
$Z_{\mathcal{U}}$-$30$-$60$ &  3 & 3.09e+04 & 54 &  1 & 2.80e+04 & 50 &  \textbf{1} & \textbf{1.57e+04} & \textbf{35.9} \\ \hline
$Z_{\mathcal{U}}$-$40$-$80$ & 4 & 4.23e+04 & 126 &  1 & 3.89e+04 & 115 &  \textbf{1} & \textbf{1.71e+04} & \textbf{67.1} \\ \hline
$Z_{\mathcal{U}}$-$50$-$100$ &  5 & 5.53e+04 & 282 &  1 & 5.10e+04 & 270 &  \textbf{1} & \textbf{1.84e+04} & \textbf{133} \\ \hline
$Z_{\mathcal{U}}$-$60$-$120$ &  7 & 9.04e+04 & 825 &  1 & 6.61e+04 & 575 &  \textbf{2} & \textbf{3.52e+04} & \textbf{369} \\ \hline\hline

$Z_{\mathcal{U}}$-$30$-$150$ &  \textbf{1} & \textbf{7.01e+03} & \textbf{57}  &  1 & 1.61e+04 & 111 &  1 & 7.24e+03 & 64 \\ \hline
$Z_{\mathcal{U}}$-$40$-$200$ & \textbf{1} & \textbf{7.79e+03} & \textbf{126} &  1 & 2.07e+04 & 323 &  1 & 7.81e+03 & 163 \\ \hline
$Z_{\mathcal{U}}$-$50$-$250$ &  \textbf{1} & \textbf{8.33e+03} & \textbf{319} &  1 & 2.67e+04 & 858 &  2 & 1.24e+04 & 414 \\ \hline
$Z_{\mathcal{U}}$-$60$-$300$ &  \textbf{1} & \textbf{9.53e+03} & \textbf{721} &  1 & 3.35e+04 & 2121 &  2 & 1.37e+04 & 1168 \\ \hline
\end{tabular}
\caption{Comparisons between rounding and the H-polytope approximation  used as body in MMC of Section~\ref{sec:rounding}. 
For each Z-polytope \volalg\ performs 10 runs. 
$k$: average number of bodies in MMC; 
Refl: average number of reflection (or boundary oracle calls) performed by \billiard;  
time: average runtime of \volalg\ in seconds. 
We set $\epsilon = 0.1$ in all cases. 
Bold marks best runtimes. 
The volumes for each Z-polytope agree up to at most $\epsilon = 0.05$ and thus omitted. \label{tab:r_nr_hp2} }
\end{table}

\begin{figure}[ht!]
\centering
\includegraphics[width=\textwidth]{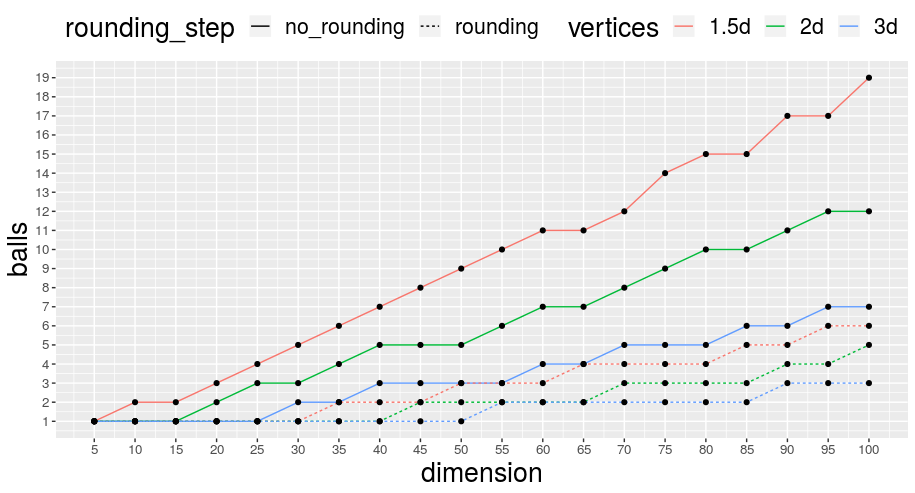}
\caption{The number of balls in MMC for $rvc$. For each dimension we generate $10$ $rvc$ and we compute the number of balls in MMC. We keep the V-polytope with the largest number of balls in MMC and then, for that one, we apply the rounding step and we compute the number of balls in MMC for the rounded polytope. \label{fig:vpoly_rounding}}
\end{figure}

\clearpage
\normalsize 

\section{Software tutorial}\label{apx:tutorial}

This section aims in providing a brief outline on how to reproduce the computational results of this article. 
First, to run the {\tt C++} implementation of \volalg\ we suggest to use the {\tt R} interface. 
To install the {\tt R} interface follow the instructions given at the repository:
\begin{center}{
\url{https://github.com/GeomScale/volume_approximation/blob/v1.1.3/doc/r_interface.md} }
\end{center}

After installation the user should open an {\tt R} terminal or alternatively use Rstudio and load the volesti package running:

\begin{lstlisting}[language=R]
library(volesti)
\end{lstlisting}

Then we start using the package by illustrating polytope generators that could be used to generate polytopes in various representations. 

\begin{lstlisting}[language=R]
d = 50
m = 200

HP = gen_cube(d, 'H')
VP = gen_cube(d, 'V')

HP = gen_simplex(d, 'H')
VP = gen_simplex(d, 'V')

HP = gen_cross(d/5, 'H')
VP = gen_cross(d/5, 'V')

# product of simplices and Birkhoff polytopes 
# are available only as H-polytopes
HP = gen_prod_simplex(d)

# 10 is the order of the Birkoff polytope
HP = gen_birkhoff(10) 

# m is the number of facets
HP = gen_rand_hpoly(d, m)  
# m is the number of vertices
VP = gen_rand_vpoly(d, m)   
# m is the number of generators
ZP = gen_rand_zonotope(d, m, generator = 'uniform')    
\end{lstlisting}
Note that when the value of the input flag of {\tt generator} in the {\tt gen\_rand\_zonotope()} is {\tt 'uniform'} then the function generates a random $Z_{\mathcal{U}}$-$d$-$n$. 
Other options are {\tt 'gaussian'} for random $Z_{\mathcal{N}}$-$d$-$n$ and {\tt 'exponential'} for random $Z_{Exp}$-$d$-$n$.

\paragraph{}
To estimate the volume of a polytope with our implementation of  \volalg\ the user should use the function {\tt volume()}. 
The output contains both the estimated volume and its logarithm (useful in overflow or underflow cases).

\begin{lstlisting}[language=R]
pair_vol = volume(HP)
# get the logarithm of the estimation
log_vol = pair_vol$log_volume
# get the volume estimation   
vol = pair_vol$volume     
\end{lstlisting}

To request polytope rounding before volume estimation using the methods we introduced in Section~\ref{sec:rounding}  run:

\begin{lstlisting}[language=R]
pair_vol = volume(HP, rounding = "isotropy")
\end{lstlisting}

To estimate the volume with algorithm \cg\ as described in~\cite{Cousins15},

\begin{lstlisting}[language=R]
pair_vol = volume(HP, setting = list("algorithm" = "CG"))
\end{lstlisting}

To compute the PCA over-approximation of a Z-polytope,

\begin{lstlisting}[language=R]
res_list = zonotope_approximation(ZP, fit_ratio = TRUE)
# The over-approximation computed with PCA
ZPred = res_list$P    
# the ratio of fitness of the computed over-approximation
fit_ratio = res_list$fit_ratio     
\end{lstlisting}

\end{document}